\begin{document}
\title{{Extreme Compressed Sensing of Poisson Rates from Multiple Measurements}}

\author{Pavan~K.~Kota, 
        Daniel~LeJeune,~\IEEEmembership{Member,~IEEE}, Rebekah~A.~Drezek, 
        and~Richard~G.~Baraniuk,~\IEEEmembership{Fellow,~IEEE}
\thanks{P. K. Kota and R. A. Drezek are with the Department of Bioengineering, Rice University, Houston, TX 77005 USA. (e-mail: pkk1@rice.edu, drezek@rice.edu)}
\thanks{D. LeJeune and R. G. Baraniuk are with the Department of Electrical and Computer Engineering, Rice University, Houston, TX 77005 USA. (e-mail: dlejeune@rice.edu, richb@rice.edu)}
}

\IEEEpubid{ \begin{minipage}{\textwidth}\ \\[12pt]
~\copyright~2021 IEEE. Personal use of this material is permitted. Permission from IEEE must be obtained for all other uses in any current or future media, including reprinting/republishing this material for advertising or promotional purposes, creating new collective works, for resale or redistribution to servers or lists, or reuse of any copyrighted component of this work in other works.\end{minipage}}

\maketitle

\begin{abstract}
Compressed sensing (CS) is a signal processing technique that enables the efficient recovery of a sparse high-dimensional signal from low-dimensional measurements. In the multiple measurement vector (MMV) framework, a set of signals with the same support must be recovered from their corresponding measurements. Here, we present the first exploration of the MMV problem where signals are independently drawn from a sparse, multivariate Poisson distribution. We are primarily motivated by a suite of biosensing applications of microfluidics where analytes (such as whole cells or biomarkers) are captured in small volume partitions according to a Poisson distribution. We recover the sparse parameter vector of Poisson rates through maximum likelihood estimation with our novel Sparse Poisson Recovery (SPoRe) algorithm. SPoRe uses batch stochastic gradient ascent enabled by Monte Carlo approximations of otherwise intractable gradients. 
By uniquely leveraging the Poisson structure, SPoRe substantially outperforms a comprehensive set of existing and custom baseline CS algorithms. Notably, SPoRe can exhibit high performance even with one-dimensional measurements and high noise levels. This resource efficiency is not only unprecedented in the field of CS but is also particularly potent for applications in microfluidics in which the number of resolvable measurements per partition is often severely limited. We prove the identifiability property of the Poisson model under such lax conditions, analytically develop insights into system performance, and confirm these insights in simulated experiments. Our findings encourage a new approach to biosensing and are generalizable to other applications featuring spatial and temporal Poisson signals. 
\end{abstract}
\begin{IEEEkeywords}
Compressed sensing,
sparse recovery,
Poisson,
maximum likelihood,
Monte Carlo methods,
microfluidics
\end{IEEEkeywords}

\IEEEpeerreviewmaketitle

\maketitle

\section{Introduction}
As data increasingly informs critical decision-making, efficient signal acquisition frameworks must keep pace. Modern signals of interest are often high-dimensional but can be efficiently recovered by exploiting their underlying structure through signal processing.
The field of compressed sensing (CS), reviewed in~\cite{Baraniuk07-cslecnotes, Eldar12_csbook}, focuses on the recovery of sparse signals from fewer measurements than the signal dimension.
Concretely, an $N$-dimensional signal $\vx^*$ with at most $k$ nonzero entries (in which case $\vx^*$ is said to be $k$-sparse) can be recovered from a measurement vector $\vy$ acquired by $M$ sensors. The sensors' linear responses to entries of $\vx^*$ define a sensing matrix $\PhiB$ such that, compactly, $\vy=\PhiB \vx^*$. Recovering $\vx^*$ from $\vy$ is known as the single measurement vector (SMV) problem~\cite{Chen98-atomicdecomp, Donoho06-cs, Candes05-decodlinprog, Cohen09-csadaptive}. In the {\em multiple measurement vector} (MMV) problem~\cite{Chen05-MMV, Chen06-MMVtheory, Tropp06-MMVpt1somp}, $D$ measurements are captured in an $M \times D$ matrix $\YB$ in order to recover $\XB^*$, an $N \times D$ signal matrix. $\XB^*$ is jointly sparse such that only $k$ rows contain at least some nonzero elements. CS has been applied extensively in imaging~\cite{Duarte-singpixCS, Willett-imagingCSrev, Lustig-csMRI} and communications~\cite{Li13-iotCSrev, Zhu11-smvMUD, Ragheb08-CSadc} and only recently in biosensing~\cite{Cleary17, Koslicki14-WGSQuikr, Aghazadeh16-UMD, Ghosh20-covidCS}.

\begin{figure}
    \centering
    \includegraphics[trim=0 195 0 180, clip, width=1\linewidth]{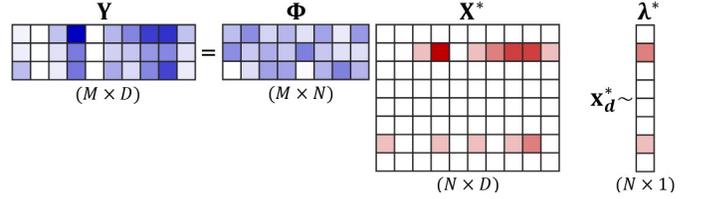}
    \caption{The {\em multiple measurement vector} (MMV) problem with Poisson signals (MMVP) with one sensor group and noiseless measurements. White squares are zeroes and darker colors represent larger values. Each column $\vx_d^*$ of $\XB^*$ is drawn from a Poisson distribution governed by the 2-sparse $\vlambda^ *$ (i.e., $\vx_d^* \overset{\mathrm{i.i.d.}}{\sim} \mathrm{Poisson}(\vlambda^*)$).}
    \label{fig:mmvp}
\end{figure}

Emerging \textit{microfluidics} technologies in the field of biosensing motivate a new MMV framework. With microfluidics, a single sample can be split into $D$ small-volume \textit{partitions} such as droplets or nanowells with $D$ on the order of $10^3$ to $10^7$~\cite{Guo12-dropthroughputrev}. 
Microfluidic partitioning captures individual analytes (e.g., cells~\cite{Hosokawa17-scseq, Linas13-scdroprev}; genes~\cite{Vogelstein99_dpcrOG}; proteins~\cite{Kim12_dpcr_molec, Yelles19-mobiledroplet}; etc.) in partitions, and analyte quantities across partitions are known to follow a Poisson distribution~\cite{Basu17-digitalrev, Moon11-poissondrop}. The common method to detect a library of analytes with large $N$ 
is to either dilute samples or split samples into more partitions
such that the Poisson distributions reduce to either empty or single-analyte capture, i.e., that columns of $\XB^*$ satisfy $\|\vx_d^*\| \in \{0,1\}$~\cite{Velez17-dHRMumd, Linas13-scdroprev}. This assumption motivates a straightforward $N$-class classification problem for each non-empty droplet, but it necessitates clear separation between classes even under noise, 
some prior knowledge of sample concentration, and the generation of many wasteful, empty partitions. We hypothesize that CS could generalize the signal recovery strategy when samples are sparse, a common characteristic of biological samples. For example, samples may contain only a few microbes or mutations of interest among many possibilities~\cite{Peters12-polymic, Vural15-sparsemutBC}. 

\IEEEpubidadjcol

We propose the following generally applicable framework for the MMV problem with Poisson signals (MMVP). Let each signal $\vx_d^*$ be drawn independently from a multivariate Poisson distribution parameterized by the $N$-dimensional, $k$-sparse vector $\vlambda^*$. That is, $x_{n,d}^* \sim \mathrm{Poisson}(\lambda_n^*)$ are independent. This framework should not be confused with the well-studied ``Poisson compressed sensing'' problem in imaging where the \emph{measurement noise}, rather than the signal, follows a Poisson distribution~\cite{Raginsky10-poissCS, Harmany11-spiraltap}. In contrast to typical MMV problems, our primary goal is to find an estimate $\widehat{\vlambda} \approx \vlambda^*$ from $D$ observations rather than to estimate $\widehat{\XB}$; however, given $\widehat{\vlambda}$, we will show that estimating $\widehat{\XB}$ is easily within reach if needed. Each signal and measurement pair ($\vx_d^*$, $\vy_d$) is in one of $G$ different \textit{sensor groups}, each with its own sensing matrix $\PhiB^{(g)}$ such that $\vy_d = \PhiB^{(g)}\vx^*_d$. 
The group $g$ associated with each index $d$ is known and deterministic.
In microfluidics, several sensor groups can be feasibly achieved by forking an input microfluidic channel into $G$ reaction zones each containing its own set of $M$ sensors. Note that $\vx_d^* \overset{\mathrm{i.i.d.}}{\sim} \mathrm{Poisson}(\vlambda^*)$ regardless of which group it is in. The statement $\YB = \PhiB \XB^*$ is the special case without noise where $G=1$ and is illustrated in Fig.~\ref{fig:mmvp}. For multiple groups with $\XB^{(g)*}$ denoting the submatrix of $\XB^*$ in group $g$, $\YB$ is the following concatenation:  
\begin{equation}
    \YB = \begin{bmatrix}
    \PhiB^{(1)}\XB^{(1)*} & ... & \PhiB^{(G)}\XB^{(G)*}
    \end{bmatrix}.
\end{equation}

\subsection{Contributions and Findings} We present the first exploration of the MMVP problem and develop a novel recovery algorithm and initial theoretical results. We take a maximum likelihood estimation (MLE) approach, treating $\YB$ as a set of $D$ observations from which to infer $\widehat{\vlambda}$. Our core contributions are
\textbf{1:}~the {\em Sparse Poisson Recovery} (SPoRe) algorithm that tractably estimates $\widehat{\vlambda}$ (Section~\ref{sec:spore});
\textbf{2:}~theoretical results on the identifiability of our MMVP model and insights into MLE performance (Section~\ref{sec:theo}); and
\textbf{3:}~simulations demonstrating SPoRe's superior performance over existing and custom baseline algorithms (Section~\ref{sec:sims}).
Although we are unable to provide theoretical guarantees for recovery, we analytically derive insights into the influence of various system parameters and confirm these insights in our simulated experiments.
We find that system designers should first maximize $M$ and then increase $G$ as necessary depending on the expected real-world conditions. While CS has always required $M>k$ for unique optima and stable recovery, the MMVP problem has no fundamental theoretical or practical lower bound for $M$. 

SPoRe's strong performance even with $M=1$ under very high measurement noise is unprecedented in CS and uniquely enables sensor-constrained applications in biosensing. Although microfluidics devices can rapidly generate a large number of partitions $D$ at a tunable rate, most optical and electrochemical sensing modalities that can keep pace are limited in $M$~\cite{Guo12-dropthroughputrev, Ngam19-droprev}. Commonly, fluorescently tagged sensors reveal droplets' contents rapidly as they flow by a detector, but spectral overlap generally limits $M$ to be five or less without highly specialized, system-specific approaches~\cite{Orth18_mxfluor}. High $M$ alternatives such as various spectroscopic techniques limit throughput, necessitate additional instrumentation, or complicate workflows \cite{Zhu13-dropsensingrev, Ngam19-droprev}. We speculate that these severe restrictions in $M$ may have forestalled research into CS's potential role in microfluidics.

\subsection{Previous Work}
To the best of our knowledge, the MMVP problem has not yet been explored, likely owing to the ongoing maturation of microfluidics and only recent application of CS to biosensing. The Poisson signal model constrains elements of $\XB$ to be nonnegative integers under a set of defined probability mass functions. Some aspects of this structure have been studied tangentially, but not the MMVP structure directly. 

The core MMV problem only imposes joint sparsity. Early greedy algorithms for this generalized scenario extend the classic Orthogonal Matching Pursuit (OMP) algorithm~\cite{Pati93-OMP} into OMPMMV~\cite{Chen06-MMVtheory}, simultaneously developed as Simultaneous OMP (S-OMP)~\cite{Tropp06-MMVpt1somp}. Generally, OMP-based algorithms iteratively build a support set of an estimated sparse solution $\widehat{\vx}$ (or $\widehat{\mathbf{X}}$) by testing for the correlation between columns of $\mathbf{\Phi}$ and the residuals between the measurements and previous estimates. A suite of greedy algorithms was recently developed that impose nonnegative constraints to a number of MMV approaches including OMP's analogues, and the nonnegative extensions outperformed their generalized counterparts~\cite{Kim16-MMVgreedyNN}.

The application of integer constraints to the SMV problem has proven challenging. Some theory involving sensing matrix design includes~\cite{Dai09-intCSwsc, Fukshansky19-intCSalg}, but practical algorithms have required additional constraints on the possible integers, e.g., $\vx \in \{0,1\}^N$ or other finite-alphabet scenarios~\cite{Nakarmi12-binaryCS, Tian09-intCSfinalph, Draper09-finfieldCS, Shim14_sMMP, Zhu11-smvMUD}. A recent study verified that these problems, as well as those with unbounded integer signals, are NP-hard~\cite{Lange20-csIntNP}. Algorithms for the unconstrained integer SMV problem thus apply greedy heuristics such as OMP-based approaches~\cite{Flinth17-PROMP, Sparrer16-intCSmmse}. 

Additional structural constraints can also make these problems tractable. The communications problem of multi-user detection (MUD), reviewed in~\cite{Alam-MUDrev}, bears some similarity to MMVP. Here, the activity of $N$ users is the signal of interest and generally follows a Bernoulli model where each user is active with the same prior probability $p_a$~\cite{Zhu11-smvMUD}. An alternative prior with $\sum_{n=1}^N x_{n,d} \sim \mathrm{Poisson}(\lambda)$ models the mean number of total active users in any given signal~\cite{Ji-MUDpoisson} although the authors solely explored an overdetermined system. Applying an MMV framework to MUD enables underdetermined ($M<N$) applications but has generally assumed that any active user is active for the entire frame of observation (a row of $\XB$ is entirely zero or nonzero)~\cite{Monsees-mmvMUDmap, Liu-mmvMUDiot}. Recently, the potency of sensor groups with a $G=2$ system was demonstrated in the MUD context~\cite{Alam-MUDmeasgroups}. Despite some similarities to MMVP with an MMV framework and discrete signals, MUD most fundamentally differs from MMVP in its utilization of the probabilistic structure of $\XB$. In MUD, the model parameters governing user activities are assumed and leveraged in recovery of $\XB$, whereas in MMVP, the model parameters in $\vlambda^*$ themselves are the target of recovery.

\section{Sparse Poisson Recovery (SPoRe) Algorithm}
\label{sec:spore}

\subsection{Notation}

We denote by $P(\cdot)$ a probability mass function and by $p(\cdot)$ a probability density function. We use $\mathbb{R}^N$ and $\mathbb{Z}^N$ to represent the $N$-dimensional Euclidean space and integer lattice, respectively. We denote by $\mathbb{R}_+^N$ and $\mathbb{Z}_+^N$ the non-negative restrictions on these spaces. We use script letters ($\mathcal{A}, \mathcal{B}$, ...) for sets unless otherwise described. We use lowercase and uppercase bold-face letters for vectors and matrices, respectively. We represent their dimensions with uppercase letters (e.g., $\XB \in \mathbb{Z}_+^{N \times D}$) that are indexed by their lowercase counterparts.
For example, $x_{n,d}$ is the element of $\XB$ in the $n$th row and $d$th column, and we use the shorthands $\vx_n$ and $\vx_d$ to represent the entire $n$th row and $d$th column vectors, respectively. Other lower case letters ($a$, $b$, $\epsilon$, etc.) may represent variable or constant scalars depending on context. We use $\vlambda^*$ and $\XB^*$ to refer to the true signal values, and we denote estimates $\widehat{\vlambda}$ and $\widehat{\XB}$ with the source of the estimate (e.g., MLE, SPoRe, baseline algorithm) being implicit from the context. We denote the null space of matrix $\mA$ by $\mathcal{N}(\mA)$. As one abuse of notation, for densities of the form $p(\vy|\vx)$, we let the corresponding $\PhiB^{(g)}$ applied to $\vx$ and the relevant noise model be implicit. Also, we let the division of two vectors represent element-wise division.

\subsection{Algorithm}

If the index $d$ is in sensor group $g$, we say that linear measurements are corrupted by an additive random noise vector $\vb_d$: 
\begin{equation} \label{eq:meas_model}
    \vy_d = \PhiB^{(g)} \vx_d + \vb_d.
\end{equation}
We let $\vb_d$ be entirely independent (e.g., additive white Gaussian noise (AWGN), as used in our simulations) or dependent on $\vx$. 
With $\vx_d \overset{\mathrm{i.i.d.}}{\sim} \mathrm{Poisson}(\vlambda^*)$, $\vy_d$ are independent across $d$ as well. The MLE estimate maximizes the average log-likelihood of the measurements: 
\begin{align}
    \widehat{\vlambda}_{MLE} &= \argmax_{\vlambda} \prod_{d=1}^D p(\vy_d|\vlambda) \\
    &= \argmax_{\vlambda} \frac{1}{D} \sum_{d=1}^{D} \log \sum_{\vx \in \mathbb{Z}_+^N} p(\vy_d|\vx)P(\vx|\vlambda). \label{eq:ll_logsum}
\end{align}
Our Sparse Poisson Recovery (SPoRe) algorithm (Algorithm~\ref{algo_spore}) optimizes this function with batch stochastic gradient ascent, drawing $B$ elements uniformly with replacement from $\{1, ..., D \}$ to populate a batch set $\mathcal{B}$. First, note that
\begin{equation}
    \nabla_{\vlambda} P(\vx|\vlambda) = P(\vx|\vlambda) \bigg(\frac{\vx}{\vlambda}-1\bigg).
\end{equation}
Denoting the objective function from the right-hand side of~$\eqref{eq:ll_logsum}$ as $\ell$, the gradient is
\begin{equation}
    \nabla_{\vlambda} \ell = \frac{1}{B} \sum_{d\in \mathcal{B}} \frac{\sum_{\vx \in \mathbb{Z}_+^N}p(\vy_d|\vx) P(\vx|\vlambda) \vx}{ \vlambda \sum_{\vx \in \mathbb{Z}_+^N} p(\vy_d|\vx)P(\vx|\vlambda) } - 1. \label{eq:grad_general}
\end{equation}
With gradient ascent, each iteration updates $\vlambda \gets \vlambda + \alpha \nabla_{\vlambda} \ell$ with learning rate $\alpha$. However, the summations over all of $\mathbb{Z}_+^N$ are clearly intractable. SPoRe approximates these quantities with a Monte Carlo (MC) integration over $S$ samples of $\vx$, newly drawn for each batch gradient step from sampling distribution $Q: \mathbb{Z}_+^N \rightarrow \mathbb{R_+}$, such that
\begin{align}
\sum_{\vx\in\mathbb{Z}_+^N} p(\vy | \vx) P(\vx | \vlambda) \approx \frac{1}{S} \sum_{s=1}^S \frac{p(\vy | \vx_s) P(\vx_s | \vlambda)}{Q(\vx_s)}.
\end{align}
The optimal choice of $Q(\vx_s)$ is beyond the scope of this work, but we found that $Q(\vx_s) = P(\vx_s|\vlambda)$ simplifies the expression, is effective in practice, and draws inspiration from the popular expectation--maximization algorithm \cite{Dempster77-EM}. In other words, the sampling function is updated at each iteration based on the current estimate of $\vlambda$. The gradient thus simplifies to
\begin{equation} \label{eq:grad_final}
    \nabla_{\vlambda} \ell = \frac{1}{B} \sum_{d \in \mathcal{B}} \frac{\sum_{s=1}^{S}p(\vy_d|\vx_s)\vx_s}{\vlambda \sum_{s=1}^{S}p(\vy_d|\vx_s)}  - 1.
\end{equation}
Note that if only one $\widehat{\vx}_d \in \mathbb{Z}_+^N$ satisfied $p(\vy_cd|\widehat{\vx}_d) > 0$ for every $\vy_d$, the objective $\ell$ would be concave with $\widehat{\vlambda} = \frac{1}{D} \sum_{d=1}^D \widehat{\vx}_d$, i.e., the MLE solution if $\XB^*$ were directly observed. Of course, with compressed measurements and noise, multiple signals may vie to ``explain'' any single measurement, but 
SPoRe's key strength is that it jointly considers independent measurements to directly estimate $\widehat{\vlambda}$. 

We note that for finite samples, since the MC integration occurs inside a logarithm, the stochastic gradient is biased. However, since it converges in probability to the true gradient, we can expect results comparable to SGD with an unbiased gradient for sufficiently large $S$~\cite{Chen18-consistentSGD}. 

\begin{figure}
  \centering
  \includegraphics[trim=0 50 0 50, clip, width=\fw \linewidth]{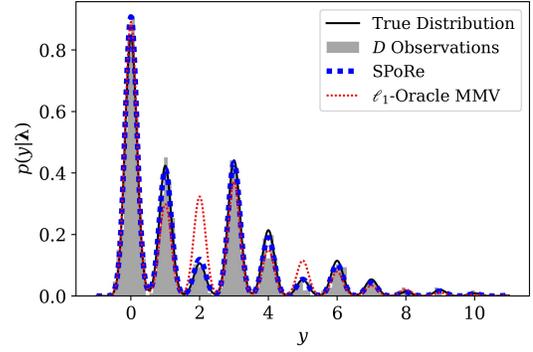}
  \caption{Example of MMVP and Sparse Poisson Recovery (SPoRe) with $M<k<N$: $\PhiB = [1, 2, 3]$, $\vlambda^* = [0.5, 0, 0.5]$, and $D=1000$ measurements under additive white Gaussian noise (AWGN) $\vb \sim \mathcal{N}(0, \sigma^2)$ with $\sigma^2 = 0.02$. SPoRe attempts to fit the distribution of measurements directly and finds $\widehat{\vlambda} \approx [0.45, 0.03, 0.44]$. For comparison, the $\ell_1$-Oracle (see Section~\ref{sec:sims}) minimizes the measurement error $\widehat{\XB} = \arg \min_{\XB} \| \YB-\PhiB \XB\|_F$ with $\XB\geq 0$ and $\sum_{n,d} x_{n,d} = \sum_{n,d} x_{n,d}^*$ as affine constraints. The estimate $\widehat{\vlambda}$ for the $\ell_1$-Oracle is then set to the average of the columns of $\widehat{\XB}$, and in this example,
  $\widehat{\vlambda} \approx [0.33, 0.31, 0.31]$. The distributions $p(\vy|\widehat{\vlambda})$ for each estimation method are compared against the true distribution $p(\vy|\vlambda^*)$ and the empirical histogram of the $D$ observations.} 
\label{fig:gmm}
\end{figure}

Fig.~\ref{fig:gmm} illustrates key concepts of SPoRe and MMVP with a small example where $M=1$ and $\vlambda^* = [0.5, 0, 0.5]$ for which we can numerically compute $p(\vy|\vlambda)$ for various $\vlambda$. The measurements $\vy_d$ are effectively drawn from an underlying mixture distribution depending on the noise; e.g., under AWGN, $\vy_d$ follows a Gaussian mixture. The weights on each mixture component are controlled by $\vlambda$. In simulated recovery, SPoRe assigns weights to the mixture via $\widehat{\vlambda}$ according to the distribution of measurements, coming close to the true underlying distribution. In contrast, an $\ell_1$-Oracle (Section~\ref{sec:sims}) which represents best-case performance for a standard, convex sparse recovery process fails because $M<k$ as shown by its error in $\vlambda$ and illustrated by the difference in the distributions. 
Moreover, by using $\PhiB = [1,2,3]$, many $\vx$ will map to the same $\vy$. While CS theory generally focuses on conditions for unique or well-spaced projections of $k$-sparse signals (e.g., the restricted isometry property, RIP~\cite{Candes05-decodlinprog}), we demonstrate that such restrictions are unnecessary in MMVP.
By accounting for the latent Poisson distribution in the signals, SPoRe succeeds even when $M<k$.

Algorithm~\ref{algo_spore} summarizes the implementation details of SPoRe. Even though $\vlambda \in \mathbb{R}_+^N$, we enforce $\vlambda \geq \epsilon$ by clipping ($\epsilon = 10^{-3}$ in our simulations) to maintain exploration of the parameter space. Note that in \eqref{eq:grad_final}, gradients can become very large with finite sampling as some elements of $\vlambda$ approach zero.
We found that rescaling gradients to maximum norm $\gamma$ helps stabilize convergence. For rescaling, we consider only the subvector $\vdelta_\Gamma$ of the $\alpha$-scaled gradient $\vdelta$, defining indices $n \in \Gamma \subseteq \{1, ... , N\}$ if $\lambda_n + \delta_n > \epsilon$. This restriction ensures that rescaling is solely based on the indices still being optimized, excluding those clipping to $\epsilon$. 

\begin{algorithm}
\caption{Sparse Poisson Recovery (SPoRe) }\label{algo_spore}
\begin{algorithmic}[1]
\Statex{\textbf{Input}: $\vlambda^{(0)}$, $B$, $S$, $\gamma$, $\alpha$, $\epsilon$}

\State $\vlambda \gets \vlambda^{(0)}$
\State $i=0$
\Repeat
    \State Draw $B$ columns of $\YB$ uniformly with replacement
    \State Draw $S$ new samples from $Q(\vx_s)$
    \State $\vdelta \gets \alpha \nabla_{\vlambda} \ell(\vlambda)$ \Comment{\eqref{eq:grad_final}}

    \If{$\|\vdelta_\Gamma \|_2 > 
    \gamma$}
        \State{$\vdelta \gets \frac{\gamma}{\|\vdelta_\Gamma \|_2} \vdelta$} \Comment{Rescale gradient step}
    \EndIf

    \State $\lambda_n \gets \max (\lambda_n+\delta_n, \epsilon)$
\Until{stopping criterion met}

\State \textbf{return} $\vlambda$
\end{algorithmic}
\end{algorithm}

For our stopping criterion, we evaluate a moving average of $\widehat{\vlambda}$ for convergence. We also track the estimated value of the objective function $\ell(\vlambda)$, reduce $\alpha$ if no improvements in $\ell(\vlambda)$ have been seen within a patience window, and terminate if $\alpha$ is reduced three times. We conducted all experiments on commodity personal computing hardware. Ultimately, recovery of $\widehat{\vlambda}$ takes a few minutes on a single core, and SPoRe can be easily parallelized in the future for faster performance.

\subsection{Practical Considerations}
\label{subsec:spore_practical}
Within an iteration, we found that using the same $S=1000$ samples for all $d\in \mathcal{B}$ helped to vectorize our implementation to dramatically improved speed over sampling $S$ times for \emph{each} drawn $\vy_d$. This simplification had no noticeable influence on performance. While we found random initializations with a small offset $\vlambda^{(0)} \sim \mathrm{Uniform}(0,1) + \nu$ (with $\nu = 0.1$) to be effective in general, we encountered a numerical issue when under low-variance AWGN. Even though AWGN results in nonzero probabilities everywhere, $p(\vy_d|\vx_s)$ may numerically evaluate to zero for all drawn samples in low-noise settings. These zeros across all samples result in undefined terms in the summation over $d \in \mathcal{B}$ in \eqref{eq:grad_final}. SPoRe simply ignores such undefined terms, but when this numerical issue occurs for \emph{all} of $\mathcal{B}$, SPoRe takes no gradient step. With very low noise and large $N$ dampening the effectiveness of random sampling, SPoRe may stop prematurely as it appears to have converged. This problem did not arise with larger noise variances where even inexact samples pushed $\widehat{\vlambda}$ in the generally appropriate direction until better samples could be drawn (recall that $Q(\vx) = P(\vx|\widehat{\vlambda})$ at each iteration). Nonetheless, we decided to set $\vlambda^{(0)} = \nu$ for consistency across all simulated experiments. We speculate that setting $\vlambda^{(0)}$ to a small value helped encourage sampling sparse $\vx$'s in early iterations to help find $\vx_s$ with nonzero $p(\vy_d|\vx_s)$, bypassing the numerical issue altogether.

\newtheorem{theorem}{Theorem}[section]
\newtheorem{definition}[theorem]{Definition}
\newtheorem{corollary}[theorem]{Corollary}
\newtheorem{lemma}[theorem]{Lemma}
\newtheorem{proposition}[theorem]{Proposition}

\section{Theory and Analysis} \label{sec:theo}
The summation over $\vx$ inside the logarithm of the objective function complicates the precise analysis of SPoRe.
However, we can consider the asymptotic MMVP problem as $D \to \infty$ and its MLE solution to gain insight into the superior recovery performance of SPoRe and to understand design trade-offs. In this section, we prove the sufficiency of a simple null space condition on $\PhiB$ for identifiability of our MLE model. 
We then characterize the loss in Fisher Information for MMVP and show how losses accrue with the increase of signals that map to the same measurements. 
Lastly, we derive insights into the influence of sensor groups through a small-scale analysis. From a system design standpoint, we find that designers should first increase $M$ as much as feasible and then increase $G$ as needed. 
All proofs can be found in the Appendix.

\subsection{Identifiability of MMVP Models} \label{subsec:ident}

Identifiability refers to the uniqueness of the model parameters that can give rise to a distribution of observations. A model $\mathcal{P} = \{p(\cdot | \vlambda) : \vlambda \in \mathbb{R}_+^{N}\}$ is a collection of distribution functions which are indexed by the parameter $\vlambda$; in the MMVP problem, each choice of $\PhiB$ and noise give rise to a different model $\mathcal{P}$. Through an optimization lens, if our model is identifiable, then $\vlambda^*$ is the unique global optimum of the data likelihood as $D \rightarrow \infty$. Recall that $p(\vy|\vlambda) = \sum_{x\in\mathbb{Z}_+^N} p(\vy|\vx)P(\vx|\vlambda)$, meaning that we can interpret this model as each sensor group consisting of a mixture whose elements' positions are governed by $\PhiB^{(g)} \vx$, distributions by the noise model, and weights by $P(\vx|\vlambda)$. We focus in this analysis on a single sensor group, since as $D\rightarrow \infty$, at least one sensor group contains infinite measurements. If the corresponding $\PhiB^{(g)}$ satisfies the conditions we describe here, then the model is identifiable. Formally:

\begin{definition} [Identifiability]
The model $\mathcal{P}$ is identifiable if $p(\vy|\vlambda) = p(\vy|\vlambda') \ \forall \vy \Rightarrow \vlambda = \vlambda'$ for all $\vlambda, \vlambda' \in \mathbb{R}_+^N$. 
\end{definition}

The identifiability of mixtures is well-studied \cite{Teicher63-IDfinmix, Tallis69-IDinfmat}; if a mixture is identifiable, the mixture weights uniquely parameterize possible distributions. For finite mixtures, a broad set of distributions including multivariate exponential and Gaussian have been proven to be identifiable \cite{Yakowitz68-IDmultivarFinite}. A finite case may manifest in realistic MMVP systems where measurements $\vy$ must eventually saturate; all sensors have a finite dynamic range of values they can capture. In the most general case, $p(\cdot|\vlambda)$ is a countably infinite mixture. Although less studied, countably infinite mixtures are identifiable under some classes of distributions \cite{Yang13-IDcountInf}. The AWGN that we use in our simulations is identifiable for both the finite and countably infinite cases. Characterizing the full family of noise models that are identifiable under countably infinite mixtures is beyond the scope of this work. 
Our contribution is that \emph{given} a noise model that yields identifiable mixtures, equal mixture weights induced by $\vlambda$ and $\vlambda'$ imply $\vlambda=\vlambda'$. We prove the sufficiency of the following simple conditions on $\PhiB$ for identifiability:

\begin{theorem} [Identifiability of Mixture Weights] \label{th:id_weights}
Let $\vb$ be additive noise drawn from a distribution for which a countably infinite mixture is identifiable. If~$\mathcal{N}(\PhiB)\cap\mathbb{R}_+^N=\{\mathbf{0}\}$ and $\vphi_n \neq \vphi_{n'} \ \forall n, n' \in \{1, \ldots , N\}$ with $n \neq n'$, then $\mathcal{P}$ is identifiable.  
\end{theorem}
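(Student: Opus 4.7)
The plan is to recognize $p(\vy \mid \vlambda)$ as a countably infinite mixture of noise-density translates, use the mixture-identifiability hypothesis to reduce the problem to matching mixture weights, and then exploit the product Poisson structure via a moment generating function argument.

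First I would set $\mathcal{Z} = \{\PhiB \vx : \vx \in \mathbb{Z}_+^N\}$ and, for each $\vz \in \mathcal{Z}$, let $\mathcal{S}_{\vz} = \{\vx \in \mathbb{Z}_+^N : \PhiB \vx = \vz\}$ and $w_{\vz}(\vlambda) = \sum_{\vx \in \mathcal{S}_{\vz}} P(\vx \mid \vlambda)$. Then
\begin{equation*}
    p(\vy \mid \vlambda) = \sum_{\vz \in \mathcal{Z}} w_{\vz}(\vlambda) \, p_{\vb}(\vy - \vz),
\end{equation*}
a countable mixture of the noise density shifted by the distinct locations in $\mathcal{Z}$. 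By the hypothesis that countably infinite mixtures of the given noise family are identifiable, $p(\vy \mid \vlambda) \equiv p(\vy \mid \vlambda')$ forces $w_{\vz}(\vlambda) = w_{\vz}(\vlambda')$ for every $\vz \in \mathcal{Z}$; equivalently, the induced distribution of the discrete random vector $\PhiB \vx$ agrees under $\vlambda$ and $\vlambda'$.

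Turning matching weights into $\vlambda = \vlambda'$ is the core step, and I would do it via the moment generating function. Because $\PhiB \vx = \sum_{n=1}^N x_n \vphi_n$ with independent $x_n \sim \mathrm{Poisson}(\lambda_n)$,
\begin{equation*}
    M_{\vlambda}(\vt) = \exp\!\left(\sum_{n=1}^N \lambda_n \bigl(e^{\vt^\top \vphi_n} - 1\bigr)\right), \qquad \vt \in \mathbb{R}^M,
\end{equation*}
which is finite everywhere since Poisson random variables have all exponential moments. Equating $M_{\vlambda}(\vt) = M_{\vlambda'}(\vt)$, taking logarithms, and rearranging yields
\begin{equation*}
    \sum_{n=1}^N (\lambda_n - \lambda'_n) \, e^{\vt^\top \vphi_n} \;=\; \sum_{n=1}^N (\lambda_n - \lambda'_n) \quad \text{for all } \vt \in \mathbb{R}^M.
\end{equation*}
The null space hypothesis forces $\vphi_n \neq \mathbf{0}$ for every $n$ (otherwise $\ve_n \in \mathcal{N}(\PhiB) \cap \mathbb{R}_+^N$), and the distinctness hypothesis gives $\vphi_n \neq \vphi_{n'}$ for $n \neq n'$, so $\{\mathbf{0}, \vphi_1, \ldots, \vphi_N\}$ are $N+1$ distinct points of $\mathbb{R}^M$. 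Restricting $\vt = s\vu$ along a generic direction $\vu$ makes the inner products $\vu^\top \vphi_n$ together with $0$ all distinct, reducing the display to a linear relation among one-variable real exponentials with distinct growth rates, which are classically linearly independent. Hence every coefficient $\lambda_n - \lambda'_n$ vanishes, proving identifiability.

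The main obstacle is the passage from matching weights to $\vlambda = \vlambda'$: the preimage sets $\mathcal{S}_{\vz}$ are generally many-to-one, so the map $\vlambda \mapsto \{w_{\vz}(\vlambda)\}_{\vz}$ cannot be inverted term by term (this is exactly the collision phenomenon illustrated in Fig.~\ref{fig:gmm} for $\PhiB = [1,2,3]$). Working at the level of the joint MGF sidesteps this combinatorial difficulty by exploiting the factorized product form of the joint Poisson distribution, converting the question into a clean linear-independence statement about exponential characters rather than a direct termwise comparison of mixture weights.
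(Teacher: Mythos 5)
Your proof is correct, but it takes a genuinely different route from the paper's. Both arguments begin identically: identifiability of countably infinite mixtures of the noise family reduces the problem to showing that equal mixture weights --- equivalently, equal pushforward distributions of $\PhiB \vx$ under $\vlambda$ and $\vlambda'$ --- force $\vlambda = \vlambda'$. From there the paper works combinatorially with collision sets: it first shows $\mathcal{C}_{\vzero} = \{\vzero\}$ (pinning down $\sum_n \lambda_n$), then proves by contradiction --- using the structure of null space vectors $\vz^{(\ve_j)}$ that keep $\ve_j + \vz^{(\ve_j)}$ nonnegative, which must have a single $-1$ at position $j$ and positive entries summing to at least $2$ --- that at least one one-hot collision set $\mathcal{C}_{\ve_j}$ is a singleton, and finally peels off coordinates one at a time via a triangular ordering of the remaining sets. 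You instead pass to the joint moment generating function of $\PhiB\vx$, which is finite everywhere and factorizes over the independent Poisson coordinates, and finish with linear independence of exponentials with distinct exponents along a generic direction; this sidesteps the collision-set combinatorics entirely and is considerably shorter. Notably, you invoke the null space hypothesis only to conclude $\vphi_n \neq \vzero$, so your argument establishes a formally stronger statement: distinct nonzero columns already suffice. For instance, it covers $\PhiB = [1, -1]$ (identifiability of the Skellam law), whose null space meets $\mathbb{R}_+^2$ nontrivially and which the paper's hypotheses exclude. What the paper's route buys in exchange is that it remains entirely at the level of collision sets --- which feeds directly into the surrounding narrative on single-vector collision sets and the extension to $\vx$-dependent noise --- and uses only elementary counting, whereas your route leans on the product-Poisson structure and the classical independence of exponential characters; both of those ingredients are standard, so the trade is a favorable one.
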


The null space condition essentially says that any nonzero vector in $\mathcal{N}(\mathbf{\Phi})$ must contain both positive and negative elements. Many practical $\mathbf{\Phi}$ satisfy this constraint (e.g., any $\mathbf{\Phi}$ with at least one strictly negative or positive row). The second condition is trivial: no two columns of $\mathbf{\Phi}$ can be identical. We also obtain a separate sufficient condition, that $\PhiB$ drawn from any continuous distribution results in identifiability. 

\begin{corollary} [Identifiability with Random Continuous $\PhiB$] \label{cor:id_continuous}
Let $\vb$ be additive noise drawn from a distribution for which a countably infinite mixture is identifiable. If the elements of $\PhiB$ are independently drawn from any continuous distribution, then $\mathcal{P}$ is identifiable.
\end{corollary}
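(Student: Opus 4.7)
The plan is to establish identifiability with probability $1$ over the draw of $\PhiB$ by verifying a lattice-injectivity property that drives the mixture argument underlying Theorem~\ref{th:id_weights}. The distinct-columns requirement is easy: for any fixed pair $n\ne n'$, the event $\vphi_n=\vphi_{n'}$ forces $M$ exact coincidences of independent continuous random variables, each a measure-zero event, and a union bound over the $\binom{N}{2}$ pairs gives distinctness almost surely. For the null-space--type requirement, I would prove an integer-lattice version: for each fixed $\vv\in\mathbb{Z}^N\setminus\{\mathbf 0\}$, pick a coordinate $n_0$ with $v_{n_0}\ne 0$; conditioning on the remaining entries of $\PhiB$, each row $\vphi^{(m)}\cdot\vv$ is an affine function of the absolutely continuous variable $\Phi_{m,n_0}$ with nonzero slope $v_{n_0}$, hence itself absolutely continuous, so $P(\PhiB\vv=\mathbf 0)=0$. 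Because $\mathbb{Z}^N\setminus\{\mathbf 0\}$ is countable, a union bound yields $\mathcal N(\PhiB)\cap\mathbb{Z}^N=\{\mathbf 0\}$ almost surely; equivalently, $\PhiB$ almost surely separates distinct elements of $\mathbb{Z}_+^N$, whose differences lie in $\mathbb{Z}^N\setminus\{\mathbf 0\}$.

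On this almost-sure event, identifiability follows from the mixture view that is the heart of Theorem~\ref{th:id_weights}. The distribution $p(\vy|\vlambda)=\sum_{\vx\in\mathbb{Z}_+^N}P(\vx|\vlambda)\,p(\vy-\PhiB\vx|\mathbf 0)$ is a countably infinite translation mixture of the noise density with pairwise distinct centers $\PhiB\vx$, so by the hypothesis on $\vb$, two parameters $\vlambda,\vlambda'$ inducing the same distribution of $\vy$ must agree as mixture weights, $P(\vx|\vlambda)=P(\vx|\vlambda')$ for every $\vx\in\mathbb{Z}_+^N$. From the full Poisson pmf, $\vlambda$ is recovered coordinate-wise, e.g., $\lambda_n=P(\mathbf e_n|\vlambda)/P(\mathbf 0|\vlambda)$ with $\mathbf e_n$ the $n$th standard basis vector.

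The main obstacle is that the stated first hypothesis of Theorem~\ref{th:id_weights}, $\mathcal N(\PhiB)\cap\mathbb{R}_+^N=\{\mathbf 0\}$, need \emph{not} hold almost surely for \emph{every} continuous distribution: for centrally symmetric distributions in the underdetermined regime $M<N$, Wendel's theorem shows this event has probability strictly less than one (e.g., for i.i.d.\ Gaussian entries with $M=2$, $N=3$ it has probability $3/4$). The plan therefore substitutes the real-cone condition with the integer-lattice condition proved above, which is genuinely almost sure for \emph{any} continuous distribution and which is precisely what the mixture argument consumes (distinct translation centers for distinct $\vx\in\mathbb{Z}_+^N$, rather than a positive-cone separation for real nonnegative signals). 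The delicate point is thus to verify that the internal mixture step of Theorem~\ref{th:id_weights} goes through with this weaker but always-almost-sure variant, i.e., that the cone condition was used only to guarantee distinctness of the $\PhiB\vx$.
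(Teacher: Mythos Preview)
Your approach is essentially the same as the paper's: both establish that $\PhiB$ is almost surely injective on $\mathbb{Z}_+^N$ (equivalently, $\mathcal{N}(\PhiB)\cap\mathbb{Z}^N=\{\mathbf 0\}$, equivalently, every collision set is a singleton) via a countable union bound, then use mixture identifiability to match Poisson weights at $\mathbf 0$ and the $\mathbf e_j$ to recover $\vlambda$. Your Wendel-theorem observation---that the real-cone hypothesis $\mathcal{N}(\PhiB)\cap\mathbb{R}_+^N=\{\mathbf 0\}$ of Theorem~\ref{th:id_weights} genuinely fails with positive probability for, e.g., Gaussian $\PhiB$ when $M<N$, so the Corollary is \emph{not} a direct instantiation of the Theorem---is a sharper point than the paper makes explicit, but the paper's proof already sidesteps this by replacing the cone condition with the no-collisions condition, just as you do.
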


We emphasize the general result of Theorem~\ref{th:id_weights}, since discrete sensing is common in biomedical systems. For example, sensors are often designed to bind to an integer number of known target sites and yield ``digital'' measurements~\cite{Basu17-digitalrev,Zhang12_specDNA}. 
Discrete $\PhiB$ can give rise to what we call \textit{collisions}. Formally:

\begin{definition} [Collisions and Collision Sets] \label{defn_coll}
Let $\PhiB \in \mathbb{R}^{M \times N}$ be a sensing matrix applied to signals $\vx \in \mathbb{Z}_+^N$. A \textit{collision} occurs between $\vx$ and $\vx'$ when $\PhiB \vx = \PhiB \vx'$. A \textit{collision set} for an arbitrary $\vu \in \mathbb{Z}_+^N$ is the set $\mathcal{C}_{\vu} = \{ \vx: \PhiB \vx = \PhiB \vu; \vx \in \mathbb{Z}_+^N \}$. 
\end{definition}

If the distribution from which $\vb$ is drawn is fixed (e.g., AWGN) or a function of $\PhiB \vx$, then the mixture weights are the probability mass of each collision set. Let the set of collision sets be $\mathcal{U}$ with $\mathcal{C}_\vu \in \mathcal{U}$ being an arbitrary collision set.

\begin{gather} 
    p(\vy|\vlambda) = \sum_{\mathcal{C}_\vu \in \mathcal{U}} p(\vy|\vx \in \mathcal{C}_\vu) P(\mathcal{C}_\vu|\vlambda) \label{eq:mix_coll} \\
    P(\mathcal{C}_\vu|\vlambda) = \sum_{\vx \in \mathcal{C}_\vu} P(\vx|\vlambda).
\end{gather}
The weights of the mixture elements are governed by $P(\mathcal{C}_\vu|\vlambda)$. Given a noise model that yields identifiable mixtures, the same distribution of observations $\vy$ implies that the mixture weights are identical, i.e. $P(\mathcal{C}_\vu|\vlambda) = P(\mathcal{C}_\vu|\vlambda') \ \forall \vu$. We prove that $P(\mathcal{C}_\vu|\vlambda) = P(\mathcal{C}_\vu|\vlambda') \ \forall \vu$ implies $\vlambda = \vlambda'$, which implies the identifiability of $\mathcal{P}$ under both the conditions of Theorem \ref{th:id_weights} and Corollary \ref{cor:id_continuous}.

Our proofs are based on the existence and implications of single-vector collision sets $\mathcal{C}_{\vx} = \{\vx\}$. When~\eqref{eq:mix_coll} holds, $\vu$ indexes both the mixture elements and the collision sets. In the general case where $\vb$ is dependent on $\vx$ and not simply $\PhiB \vx$, signals participating in the same mixture element may have different noise distributions. These differences can only further subdivide collision sets and leaves single-vector collision sets unaffected. Thus, our results also cover the general noise case.

\subsection{Fisher Information of MMVP Measurements}\label{subsec:fisher}

While identifiability confirms that $\vlambda^*$ is a unique global optimum of the MLE problem given infinite observations, Fisher Information helps characterize estimation of $\vlambda^*$ as $D$ increases. The Fisher Information matrix $\mathcal{I}$ is the (negative) Hessian of the expected log-likelihood function at the optimum $\vlambda^*$, and it is well-known that under a few technical conditions the MLE solution is asymptotically Gaussian with covariance 
$\mathcal{I}^{-1}/D$. Intuitively, higher Fisher Information implies a ``sharper'' optimum that needs fewer observations for stable recovery. For direct observations of Poisson signals $\vx^*_d$ rather than $\vy_d$, $\mathcal{I}$ is diagonal with $\mathcal{I}_{n,n} = 1/\lambda^*_n$.
In MMVP with observations of noisy projections ($\vy_d$), $\mathcal{I}$ and its inverse are difficult to analyze. We can, however, instead characterize the reduction in $\mathcal{I}_{n,n}$ in MMVP caused by the noisy measurement of $\vx_d^*$ and derive an insight that we empirically confirm in Section~\ref{subsec:efficiency}. Concretely, elements of $\mathcal{I}$ follow
\begin{equation} \label{fi_core}
    \mathcal{I}_{i,j} = \mathbb{E} \bigg[\bigg (\frac{\partial}{\partial \lambda^*_i} \log p(\vy|\vlambda^*)\bigg) \bigg( \frac{\partial}{\partial \lambda^*_j} \log p(\vy|\vlambda^*)\bigg) \bigg ].
\end{equation} 
We denote the shorthand $w_\vx \defeq p(\vy|\vx)P(\vx|\vlambda)$ and note that $\sum_{\vx} w_\vx = p(\vy|\vlambda)$. Following a similar derivation for the partial derivatives in~\eqref{eq:grad_final}, it can be shown that the general expression for diagonal elements $\mathcal{I}_{n,n}$ is
\begin{equation}
    \mathcal{I}_{n,n} = \int \bigg(\frac{\sum_\vx w_\vx{x_n}}{\sum_\vx w_\vx\lambda^*_n}  - 1\bigg)^2 \bigg(\sum_\vx w_\vx \bigg) d \vy. \label{eq:I_gen}
\end{equation}
In the ideal scenario, we observe $\vx^*_d$ directly such that
\begin{gather}
    \mathcal{I}_{n,n}^\rmideal = \sum_\vx P(\vx|\vlambda) \bigg(\frac{x_n}{\lambda_n} - 1\bigg)^2  \bigg(\int p(\vy|\vx) d \vy\bigg) . \label{eq:I_ideal}
\end{gather}
It can be easily shown that~\eqref{eq:I_ideal} reduces to the canonical $1/\lambda^*_n$. The integration of $p(\vy|\vx)$ evaluates to one, but we can manipulate it algebraically to re-express the quantity as
\begin{gather}
    \mathcal{I}_{n,n}^\rmideal = \int \sum_\vx \bigg[ w_\vx \bigg(\frac{x_n}{\lambda_n} - 1\bigg)^2 \bigg] d \vy. \label{eq:I_ideal_reex}
\end{gather}
Let $\mathcal{I}_{n}^\rmloss \defeq \mathcal{I}_{n,n}^\rmideal - \mathcal{I}_{n,n}$ and let $\sum_{(\vx',\vchi)}$ denote 
the sum over all pairs of signals $\vx', \vchi \in \mathbb{Z}_+^N$. Expanding Equations~\eqref{eq:I_gen} and~\eqref{eq:I_ideal_reex} and simplifying yields
\begin{align}
    \mathcal{I}_{n}^\rmloss
    &= \frac{1}{{\lambda^*_n}^2} \int \bigg(\sum_\vx w_\vx x_n^2 - \frac{(\sum_\vx w_\vx x_n)^2}{\sum_\vx w_\vx} \bigg) d \vy \nonumber \\ 
    &= \frac{1}{{\lambda^*_n}^2} \int \frac{1}{\sum_\vx w_\vx} \bigg( \sum_{\forall \vx',\vchi} w_{\vx'} w_{\vchi} (x'_n - \chi_n)^2 \bigg) d \vy . \label{eq:fi_loss}
\end{align}

Note that $\mathcal{I}_{n}^\rmloss$ is non-negative such that $\mathcal{I}_{n,n} \leq \mathcal{I}_{n,n}^\rmideal$ and that pairs of signals with $x'_n \neq \chi_n$ can contribute to $\mathcal{I}_{n}^\rmloss$. Also note that, $w_{\vx'} w_{\vchi} = p(\vy|\vx') p(\vy|\vchi) P(\vx'|\vlambda^*)P(\vchi|\vlambda^*)$ and that $P(\vx|\vlambda^*)>0$ only when $\mathrm{supp}(\vx) \subseteq \mathrm{supp}(\vlambda^*)$. Thus, the Fisher Information is only reduced over the direct Poisson observation case when there are pairs of signals that are well-explained by the same $\vy$ and also likely Poisson signals. Clearly, $\vlambda^*$ with higher $k$ will result in more of such pairs, which we confirm in Section~\ref{subsec:efficiency}. Although further precise analysis via Fisher Information is challenging, we provide deeper analysis of the special case of the MMVP problem with small $\vlambda$ through a different lens in the next section.

\subsection{Small Scale Analysis} \label{subsec:small}

With identifiability, we know that $\vlam^*$ uniquely maximizes the expected log-likelihood. However, because SPoRe uses stochastic gradient ascent to optimize the empirical log-likelihood, it will typically achieve a $\widehat\vlam$ that is near but not equal to $\vlam^*$. We therefore wish to understand how the neighborhood of $\vlam^*$ changes given the parameters of the problem. The natural way to do this for MLE problems is to consider the Fisher Information matrix as in the previous section, but the presence of a sum inside the logarithm makes analysis difficult. Instead, we consider a particular $\widetilde{\vlam}$ near $\vlam^*$ that solves an optimization related to the original likelihood maximization problem. To further simplify the setting, we consider the ``small scale'' case where $\sum_n \lambda_n^*$ is small enough that there is almost never a case where $\sum_n x_n^* > 1$. We emphasize that although this setting is simple, the MLE approach can still drastically outperform a trivial solution such as $\widehat\vlam = \E[\widehat \vx]$, where $\widehat \vx = \argmax_\vx p(\vy | \vx)$, since with sufficient noise, $\widehat \vx \neq \vx^*$ with arbitrary probability (Section~\ref{subsec_sims_int}).

At the small scale, the distribution of each $x_n^*$ becomes Bernoulli with parameter $\lambda_n$, and the probability that $x_n^* = 1$ and $x_{n'}^* = 1$ for $n \neq n'$ vanishes. Let $n^* \defeq$ (the first nonzero index of $\vx^*$, 0 if none), which has a categorical distribution with parameter $\vlam^*$. We abuse notation so that $\vphi_0 = \vzero$, $\lambda_0^*$ is the probability that $n^* = 0$, and $\sum_{n=0}^N \lambda_n^* = 1$. Applying Jensen's inequality to the log-likelihood for the conditional expectation given $n^*$, we obtain 
\begin{align}
    \E \left[ \log \sum_{n=0}^N p(\vy | n) \lambda_n \right]
    \leq
    \E_{n^*} \left[ \log \sum_{n=0}^N \mathbb{E}_{\vy|n^*} \left[p(\vy | n) \right] \lambda_n \right].
\end{align}
Call the right-hand side of this inequality the \emph{Jensen bound}.
This Jensen bound via the logarithm has the attractive property of having a gradient that is equal to a first-order Taylor approximation of the gradient of the original likelihood.\footnote{The Taylor expansion is of $f(u, v) = u/v$, for which a first-order approximation yields $\E[U/V] \approx \E[U]/\E[V]$ for random variables $U$, $V$.} To see this, consider the partial derivatives for a single $\lambda_n$:
\begin{align}
    \E \left[ \frac{p(\vy | n)}{\sum_{n'=0}^N p(\vy | n') \lambda_{n'}} \right]
    \approx
    \E_{n^*} \left[ \frac{ \mathbb{E}_{\vy | n^*} [p(\vy | n)] 
    }{
    \sum_{n'=0}^N \mathbb{E}_{\vy | n^*} [p(\vy | n')] \lambda_{n'}
    }
    \right].
\end{align}
Thus, we can expect the optimizer of the Jensen bound to be close to $\vlambda^*$ (this is particularly true as measurement noise vanishes and the bound becomes tight).

In the case where $G=1$ under AWGN, we have the following result characterizing the~solution~of~the~Jensen~bound.
\begin{proposition}
    \label{prop:small-single}
    If $\vy \sim \mathcal{N}(\vphi_{n^*}, \sigma^2 \mI)$ and
    \begin{align}
        \mK = \left( \exp \left\{ -\frac{1}{4 \sigma^2} \|\vphi_n - \vphi_{n'} \|_2^2 \right\} \right)_{n, n'=0}^N
    \end{align}
    is invertible, then the maximizer $\widetilde{\vlam}$ of the Jensen bound satisfies
    \begin{align}
        \widetilde{\vlam} \propto \mK^{-1} \left( \frac{\vlambda^*}{\mK^{-1} (\vs - \vmu)} \right).
    \end{align}
    where $\vs \in \partial \|\widetilde{\vlam}\|_1$ and for all $n$, $\mu_n \geq 0$ and $\mu_n \tilde \lambda_n = 0$. 
\end{proposition}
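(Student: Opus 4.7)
My plan is to reduce the Jensen bound to an explicit finite-dimensional concave maximization in $\vlambda$ and then extract the stationarity equation via KKT. The work splits into (a) evaluating the inner expectation in closed form, (b) identifying the implicit constraint set from the small-scale reduction, and (c) a short Lagrangian manipulation that matches the form in the statement.

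For (a), I would use the Gaussian-times-Gaussian integration identity. Since $\vy \mid n^* \sim \mathcal{N}(\vphi_{n^*}, \sigma^2 \mI)$ and $p(\vy \mid n) = \mathcal{N}(\vy; \vphi_n, \sigma^2 \mI)$,
\[
\mathbb{E}_{\vy \mid n^*}[p(\vy \mid n)] \;=\; (4\pi\sigma^2)^{-M/2} \exp\!\Big(-\tfrac{1}{4\sigma^2}\|\vphi_{n^*} - \vphi_n\|_2^2\Big) \;=\; C\,K_{n^*,n},
\]
where $C$ is independent of $\vlambda$. Dropping the additive constant, the Jensen bound equals $F(\vlambda) \defeq \sum_{n^*=0}^N \lambda^*_{n^*}\,\log(\mK\vlambda)_{n^*}$. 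For (b), the small-scale reduction makes $n^*$ a genuine categorical random variable, so $\vlambda$ lives on the probability simplex $\{\vlambda \geq \mathbf{0} : \sum_{n=0}^N \lambda_n = 1\}$. Because $\log$ is concave and $\mK\vlambda$ is linear in $\vlambda$, $F$ is concave and KKT is both necessary and sufficient.

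For (c), I would introduce a scalar multiplier $\nu$ for the equality constraint and $\vmu \geq \mathbf{0}$ for nonnegativity with $\mu_n\widetilde{\lambda}_n = 0$, and define $\beta_{n^*} \defeq \lambda^*_{n^*}/(\mK\widetilde{\vlambda})_{n^*}$. Since $\mK$ is symmetric,
\[
\nabla_{\lambda_j} F(\widetilde{\vlambda}) \;=\; \sum_{n^*} \lambda^*_{n^*} \frac{K_{n^*,j}}{(\mK\widetilde{\vlambda})_{n^*}} \;=\; (\mK\vbeta)_j,
\]
so stationarity reads $\mK\vbeta = \nu\mathbf{1} - \vmu$. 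Invertibility of $\mK$ then yields $\vbeta = \mK^{-1}(\nu\mathbf{1} - \vmu)$, and combining with the defining relation $\mK\widetilde{\vlambda} = \vlambda^*/\vbeta$ (elementwise) gives $\widetilde{\vlambda} = \mK^{-1}\!\big(\vlambda^*/\mK^{-1}(\nu\mathbf{1} - \vmu)\big)$.

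To match the form in the statement, I note that $\mathbf{1} \in \partial\|\widetilde{\vlambda}\|_1$ whenever $\widetilde{\vlambda} \geq \mathbf{0}$, so $\nu\mathbf{1} = \nu\vs$ for a valid $\vs \in \partial\|\widetilde{\vlambda}\|_1$; pulling $\nu$ out of the inner $\mK^{-1}$ by linearity, absorbing the resulting $1/\nu$ into an overall proportionality constant, and rescaling $\vmu \leftarrow \vmu/\nu$ yields exactly $\widetilde{\vlambda} \propto \mK^{-1}\!\big(\vlambda^*/\mK^{-1}(\vs - \vmu)\big)$ with the stated complementary slackness. The main obstacle is conceptual rather than computational: correctly identifying the implicit constraint set after the small-scale reduction and recognizing that the $\ell_1$ subdifferential in the statement is simply the packaged form of the equality-constraint multiplier $\nu\mathbf{1}$, which is what allows the conclusion to be stated up to proportionality.
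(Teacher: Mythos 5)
Your proposal is correct and follows essentially the same route as the paper: evaluate $\mathbb{E}_{\vy|n^*}[p(\vy|n)]$ as a Gaussian convolution proportional to $K_{n^*,n}$, write the first-order KKT condition for the resulting concave program, use the symmetry of $\mK$ to express stationarity as $\mK(\vlambda^*/\mK\widetilde{\vlam}) = \nu\vs - \vmu$, then invert and rescale. The only cosmetic difference is that you impose the equality constraint $\sum_n \lambda_n = 1$ where the paper uses $\|\vlambda\|_1 \leq 1$ with multiplier $c \geq 0$; since the objective is increasing in each coordinate the constraint is active at the optimum, so the two formulations coincide and your more explicit unpacking of the ``rescale $\vmu$'' step is a faithful filling-in of the paper's argument.
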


In the case where all entries of $\widetilde{\vlam}$ are positive, $\vs - \vmu = \vone$. $\mK$ has values of one along the diagonal and smaller values off the diagonal, so it mimics the identity matrix. Clearly, as $\mK \to \mI$, $\widetilde{\vlam} \to \vlambda^*$. However, given nonzero $\sigma^2$, $\mK$ is bounded away from $\mI$. Furthermore, since it is impossible to find a set of more than $M+1$ equidistant points in $\mathbb{R}^M$, the off-diagonal values of $\mK$ will differ when $M < N$, introducing distortion in the transformation.

However, even if $M < N$, if $\vy$ is a measurement from a \emph{random sensor group}, then the effect of this distortion can be mitigated such that $\widetilde{\vlam}$ is a reliable estimator of $\vlambda^*$ from a support recovery perspective:

\begin{theorem}
    \label{thm:small-groups}
    If $\vy \sim \mathcal{N}(\vphi_{n^*}^{(g)}, \sigma^2 \mI)$, $g$ is distributed uniformly, and $\vphi_{n}^{(g)} \overset{\mathrm{i.i.d.}}{\sim} \mathcal{N}(0, \mI)$, then if $G \to \infty$ and all elements of the maximizer $\widetilde{\vlam}$ of the Jensen bound are strictly positive, there exist $c_1 \geq 0, c_2 \in \mathbb{R}$ such that $\widetilde{\vlam}_n = c_1 \lambda_n^* + c_2$ for $1 \leq n \leq N$.
\end{theorem}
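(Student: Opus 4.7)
The plan is to pass to the limit $G \to \infty$ via a law of large numbers on the iid sensing groups, reducing the empirical Jensen bound to its expectation
\[
\bar J(\tilde \vlam) = \sum_{n^* = 0}^N \lambda^*_{n^*} \, \mathbb{E}\!\left[\log (\mK \tilde \vlam)_{n^*}\right],
\]
where $\mK = \mK^{(g)}$ is the random Gaussian kernel matrix from Proposition~\ref{prop:small-single} with $\vphi_0 = \vzero$ and $\vphi_n \overset{\mathrm{i.i.d.}}{\sim} \mathcal{N}(\vzero, \mI)$ for $n \geq 1$. Since $\bar J$ is concave and convergence is uniform on the compact simplex, the maximizer of the empirical objective converges to that of $\bar J$.

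Because all components of $\tilde \vlam$ are strictly positive, the simplex-constrained stationarity conditions for $\bar J$ reduce to
\[
F_n(\tilde \vlam) := \sum_{n^* = 0}^N \lambda_{n^*}^* \, \mathbb{E}\!\left[\frac{K_{n,n^*}}{(\mK \tilde \vlam)_{n^*}}\right] = \nu
\]
for every $n \in \{0, 1, \ldots, N\}$, with $\nu$ the simplex Lagrange multiplier. Fixing two indices $n_1, n_2 \in \{1, \ldots, N\}$, I would subtract $F_{n_1} = F_{n_2}$ and symmetrize each $n^*$-summand using the measure-preserving swap $\vphi_{n_1} \leftrightarrow \vphi_{n_2}$, splitting the sum by $n^* = 0$, $n^* \in \{n_1, n_2\}$, and $n^* \in \{1, \ldots, N\}\setminus\{n_1, n_2\}$. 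The swap yields second-moment integrals of the form $\mathbb{E}[(K_{n_1, n^*}-K_{n_2, n^*})^2/(\text{positive})]$ that factor $(\tilde \lambda_{n_1} - \tilde \lambda_{n_2})$ out of the ``off-pair'' contributions and $(\lambda^*_{n_1} - \lambda^*_{n_2})$ out of the two ``diagonal'' contributions ($n^* \in \{n_1, n_2\}$, where $K_{n_1,n_1} = K_{n_2,n_2} = 1$), yielding
\[
A(\tilde \vlam)\,(\tilde \lambda_{n_1} - \tilde \lambda_{n_2}) \;=\; B(\tilde \vlam)\,(\lambda^*_{n_1} - \lambda^*_{n_2})
\]
with $A,B \geq 0$. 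Writing $c_1 := B/A$ and $c_2 := \tilde \lambda_n - c_1 \lambda^*_n$ for any reference index $n$ then gives the claimed affine form, provided $c_1$ is independent of the chosen pair.

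The main obstacle is precisely this pair-independence of $c_1$: at face value both $A$ and $B$ involve Gaussian-kernel moments indexed by $m \notin \{n_1, n_2\}$, so their ratio could in principle depend on the pair. The plan for resolving this is to exploit the conditional exchangeability of $(\vphi_m)_{m \notin \{n_1, n_2\}}$ given $(\vphi_{n_1}, \vphi_{n_2})$, which forces the off-pair indices to enter $A$ and $B$ only through pair-independent symmetric aggregates in $\tilde \lambda_m$ and $\lambda^*_m$; combined with the full permutation equivariance of $\bar J$ under the symmetric group on $\{1, \ldots, N\}$, this should collapse any remaining pair-specific dependence and leave $c_1$ a global constant. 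A convenient fallback, if the direct calculation becomes unwieldy, is to first verify the affine relation on the subset of pairs with $\lambda^*_{n_1} = \lambda^*_{n_2}$ (where equivariance alone forces $\tilde \lambda_{n_1} = \tilde \lambda_{n_2}$) and then propagate it to all pairs by a continuity argument in $\vlam^*$.
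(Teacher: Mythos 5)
Your reduction as $G\to\infty$ lands on a different limiting objective than the one the theorem refers to. The paper's Jensen bound is taken conditionally on $n^*$ only, so the average over the random group $g$ sits \emph{inside} the logarithm; as $G\to\infty$ the law of large numbers then replaces the kernel by its deterministic expectation $\widetilde{\mK}=\E[\mK]$, and the objective becomes $\sum_{n^*}\lambda^*_{n^*}\log(\widetilde{\mK}\widetilde{\vlam})_{n^*}$ --- exactly the setting of Proposition~\ref{prop:small-single} with $\mK$ replaced by $\widetilde{\mK}$. Because the $\vphi_n$ are i.i.d.\ Gaussian, $\widetilde{\mK}$ is exchangeable on indices $1,\dots,N$: its sub-block is $(1-a)\mI+a\mJ$ with $a=(\sigma^2/(\sigma^2+1))^{M/2}$, its inverse retains the $\alpha\mI+\beta\mJ$ structure plus a rank-one border for $n=0$, and substituting into $\widetilde{\vlam}\propto\mK^{-1}\bigl(\vlambda^*/(\mK^{-1}\vone)\bigr)$ yields the affine form in two lines. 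That is the paper's entire proof. By instead applying Jensen per group and averaging afterwards, you obtain $\bar J(\widetilde{\vlam})=\sum_{n^*}\lambda^*_{n^*}\E[\log(\mK\widetilde{\vlam})_{n^*}]$, a strictly tighter bound whose maximizer is a different object; proving the theorem for it is neither required nor obviously true.

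Even taking your objective on its own terms, the argument does not close. The swap $\vphi_{n_1}\leftrightarrow\vphi_{n_2}$ cleanly factors $(\tilde\lambda_{n_1}-\tilde\lambda_{n_2})$ out of the off-pair terms, but for $n^*\in\{n_1,n_2\}$ the denominators $(\mK\widetilde{\vlam})_{n_1}$ and $(\mK\widetilde{\vlam})_{n_2}$ are not exchanged by the swap --- they differ by $(1-K_{n_1,n_2})(\tilde\lambda_{n_1}-\tilde\lambda_{n_2})$ afterwards --- so the ``diagonal'' contribution is a multiple of $(\lambda^*_{n_1}-\lambda^*_{n_2})$ only up to additional terms proportional to $(\tilde\lambda_{n_1}-\tilde\lambda_{n_2})$ that must be tracked and sign-checked. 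More fundamentally, the pair-independence of $c_1=B/A$, which you correctly flag as the crux, is not delivered by conditional exchangeability: $A$ and $B$ depend on $\widetilde{\vlam}$ and $\vlambda^*$ restricted to the complementary indices, and since these are arbitrary non-symmetric vectors the off-pair coordinates do not enter only through pair-independent symmetric aggregates. The fallback also fails: establishing $\tilde\lambda_{n_1}=\tilde\lambda_{n_2}$ whenever $\lambda^*_{n_1}=\lambda^*_{n_2}$ constrains the map $\vlambda^*\mapsto\widetilde{\vlam}$ only on a measure-zero set, and continuity alone cannot upgrade this to an affine relation among the coordinates of $\widetilde{\vlam}$ at a fixed $\vlambda^*$ with distinct entries. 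The missing ingredient is precisely the exchangeability of the \emph{averaged} kernel, which is only available once the $g$-average is moved inside the logarithm as in the paper.
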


If $\widehat \vlambda$ has the same rank ordering as $\vlambda^*$, the exact support can be recovered. Therefore, we expect an increase in $G$ to improve performance in tasks such as support recovery.
From this result, however, we expect gains due to increasing $G$ to be less immediate than those due to increasing $M$ (and indeed, we see this in our simulations in Section~\ref{subsec:sumlambda_and_k}). To see this, contrast the asymptotic nature of Theorem~\ref{thm:small-groups} in $G$ with the fact that for a finite choice of $M$ (specifically $M=N$) we can select all $\vphi_n$ equidistant (or that for $M$ even smaller we can select $\PhiB$ satisfying a RIP with some acceptable distortion) and obtain the same reliability result.

\section{Simulations} \label{sec:sims}

\begin{figure*} 
  \centering
  \begin{tabular}{@{}l@{\hskip 0in}c@{}}
  \captionsetup[subfigure]{oneside,margin={0.7cm,0cm}}
  \subfloat[Measurements Required]{\label{fig:cs_M} \includegraphics[trim=0 35 13 50, clip, width=0.302 \textwidth, valign=t]{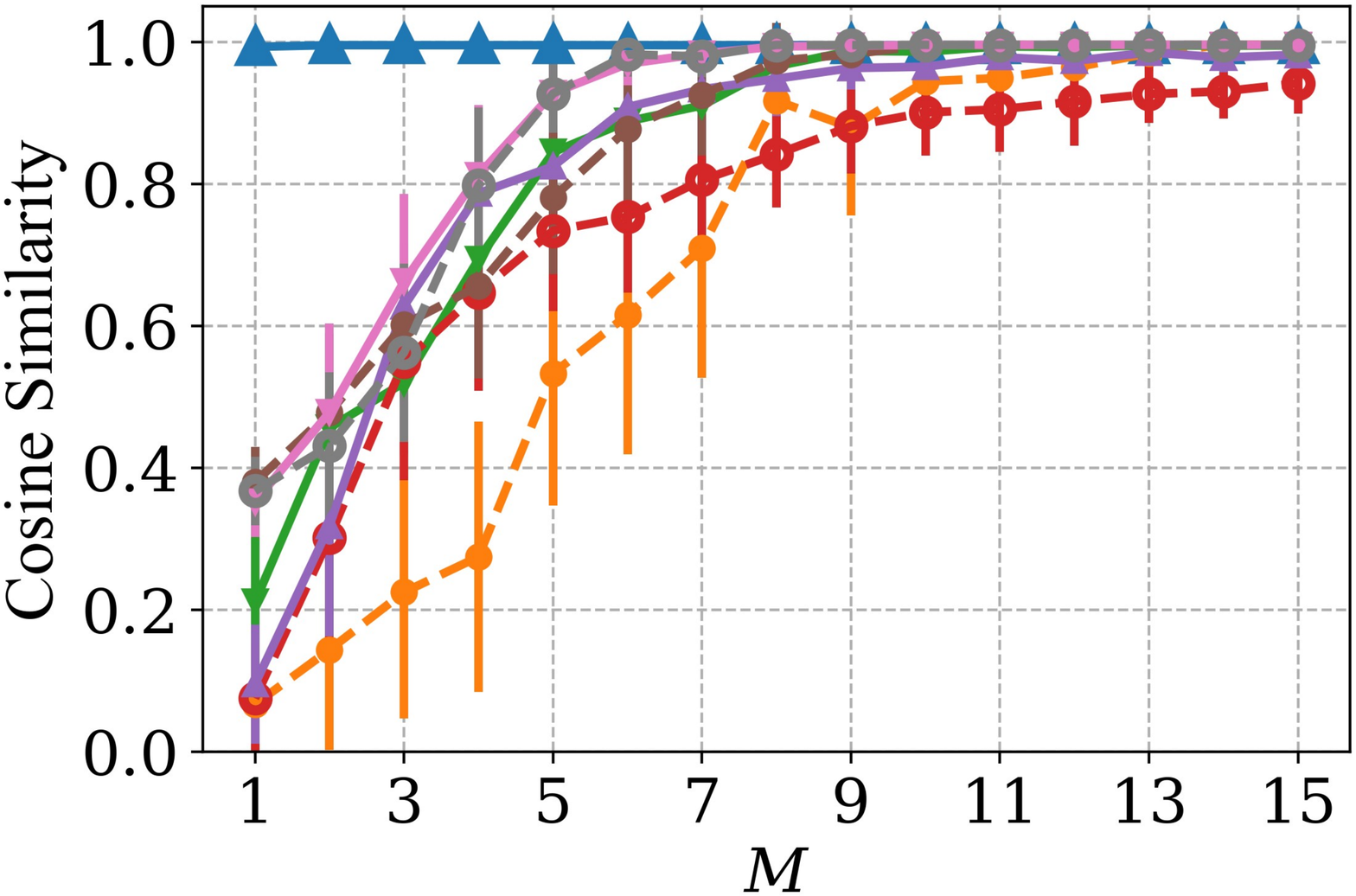}}
  \captionsetup[subfigure]{oneside,margin={0cm,0cm}}
  \subfloat[Tolerance to AWGN]{\label{fig:cs_sigma} \includegraphics[trim=115 35 -20 50, clip, width=0.27 \textwidth, valign=t]{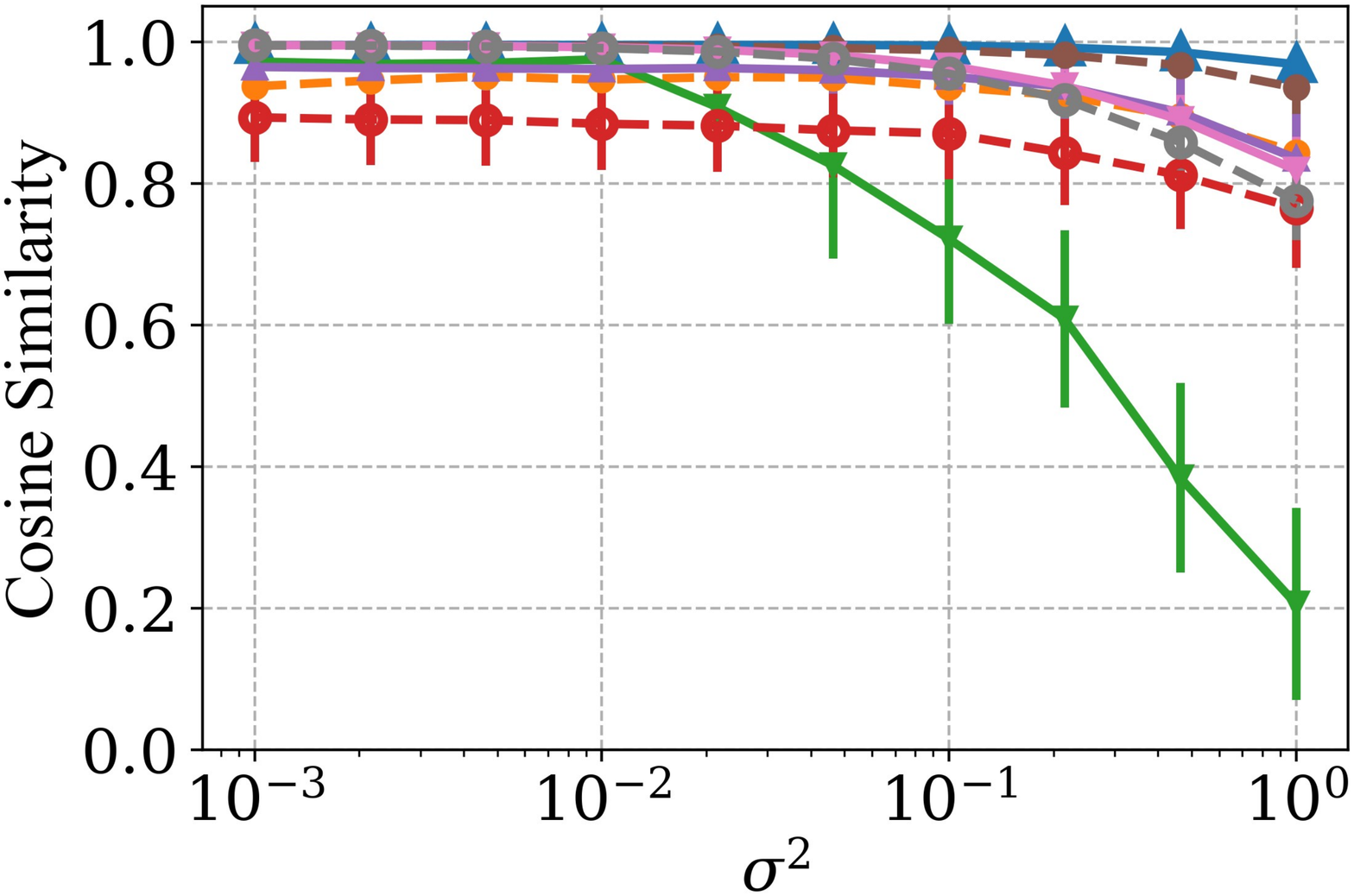}}
  \subfloat[Dynamic Range]{\label{fig:cs_lambda}\includegraphics[trim=115 30 0 45, clip, width=0.262 \textwidth, valign=t]{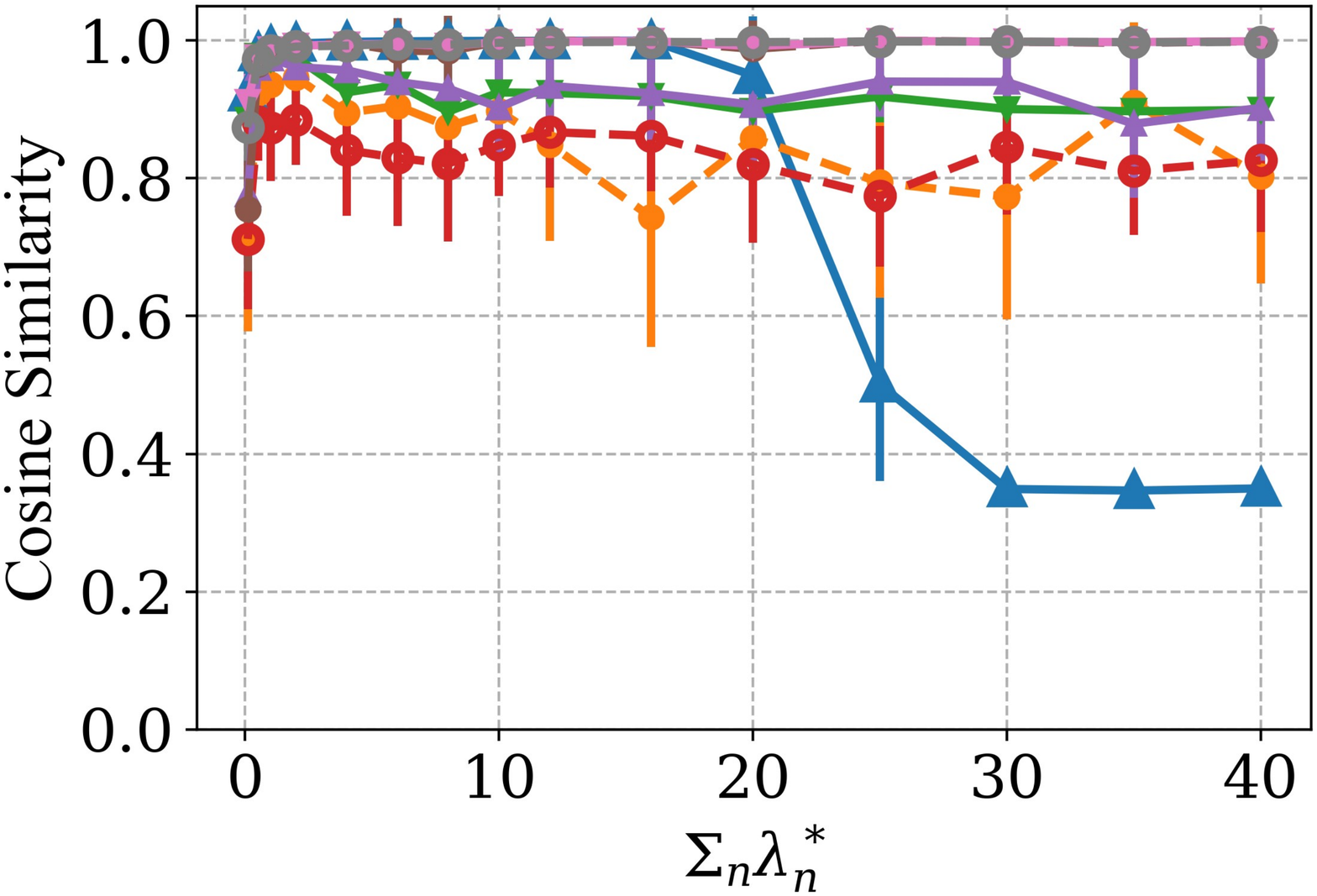}}
  \subfloat{\includegraphics[trim=0 0 0 22, clip, width=0.136\textwidth, valign=t]{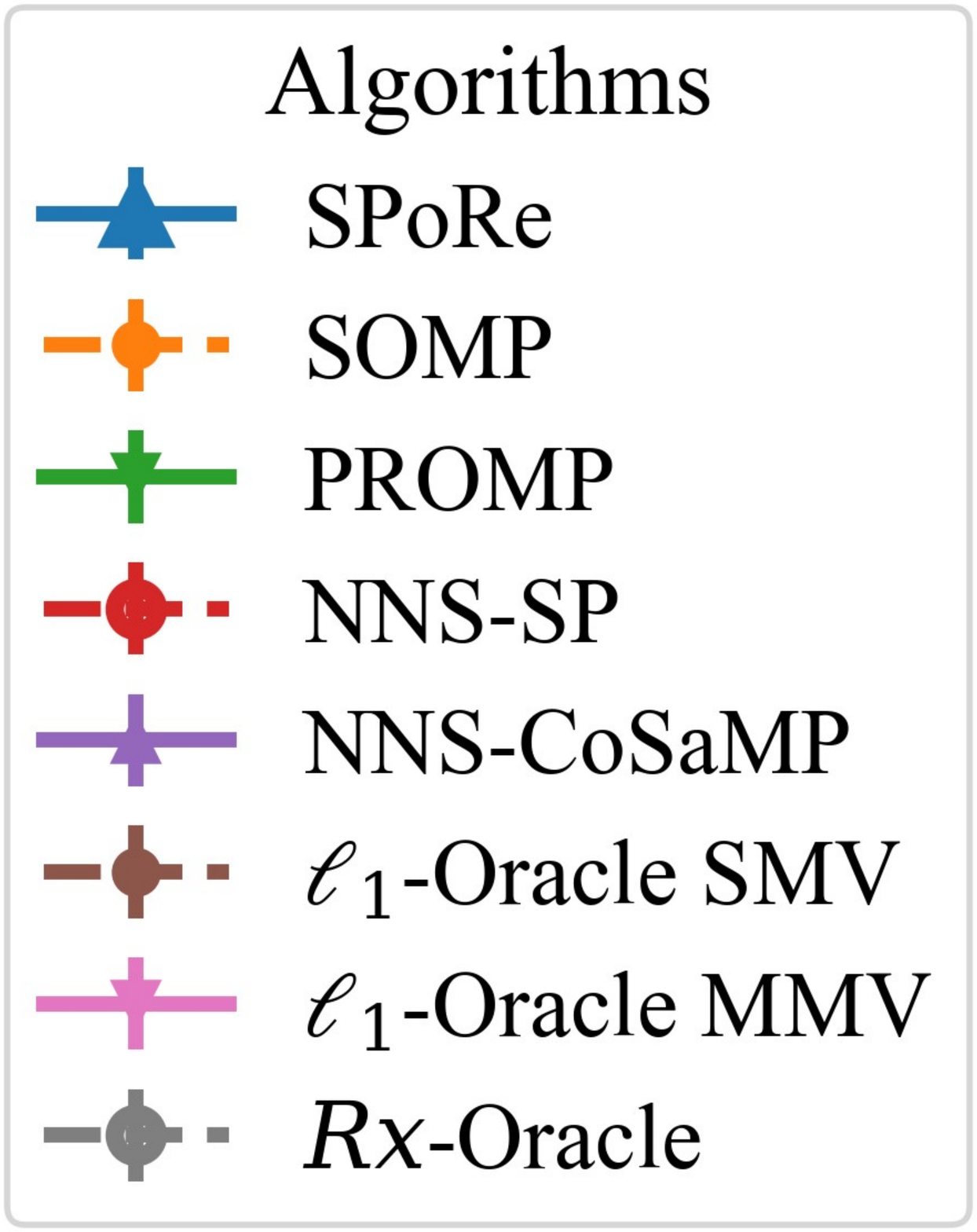}}
  \end{tabular}
\caption{Performance of SPoRe vs.\ compressed sensing baseline algorithms over 50 trials. Common settings unless otherwise specified are $M=10$, $k=3$, $N=20$, $D=100$, $G=1$, $\sum_n \lambda_n^*=2$. \textbf{(a)} Performance as a function of $M$, with $\sigma^2 = 10^{-6}$ for comparison in an effectively noiseless setting. \textbf{(b)} Performance as a function of AWGN variance, with $M=10$. \textbf{(c)} Performance as a function of $\sum_n \lambda_n^*$, with $\sigma^2 = 10^{-2}$ and $M=10$.}
\label{fig:cs_baselines}
\end{figure*}

In this section, we present comparisons of SPoRe against existing and custom baseline algorithms and follow with focused experimentation on SPoRe's performance and limitations. SPoRe and our custom alternating baseline (Section~\ref{subsec_sims_int}) are the only algorithms designed to output an estimate $\widehat{\vlambda}$ directly. For the algorithms that find an estimate $\widehat{\XB}$, we set their estimates $\widehat{\vlambda} = \frac{1}{D} \sum_{d=1}^D \widehat{\vx}_d$, i.e., the canonical Poisson MLE if $\XB^*$ were observed directly. For a performance metric, we chose cosine similarity between $\widehat{\vlambda}$ and $\vlambda^*$ as it captures the relative distribution of elements of the solution which we believe is of most utility to a user. Although comparisons of cosine similarity mask differences in magnitude, estimates with high cosine similarity also exhibited low mean-squared error in our experience (results not shown). We plot cosine similarity alone for brevity.
In all simulations, we use AWGN and set $\phi^{(g)}_{m,n} 
\overset{\mathrm{i.i.d.}}{\sim}
\mathrm{Uniform}(0,1)$ since many sensors are restricted to nonnegative measurements. For each parameter combination, we evaluate over 50 trials in which we draw new $\PhiB^{(g)}$ and $\vlambda^*$ for each trial. 
Due to high performance variability for some baseline algorithms, all error bars are scaled to $\pm \frac{1}{2}$ standard deviation for consistency and readability. 

\subsection{Comparison against existing baselines}
With no existing algorithm designed for Poisson signals, we compare against a number of algorithms with various relevant structural assumptions. We compare against both greedy and convex optimization approaches. First, we use DCS-SOMP~\cite{Dror09-DCS}, a generalization of the common baseline Simultaneous Orthogonal Matching Pursuit (S-OMP)~\cite{Tropp06-MMVpt1somp} that assumes no structure and greedily solves MMV problems for any value of $G$. Next, we use NNS-SP and NNS-CoSaMP~\cite{Kim16-MMVgreedyNN}, two greedy algorithms for nonnegative MMV CS motivated by subspace pursuit (SP)~\cite{Dai09_SP} and compressive sampling matching pursuit (CoSaMP)~\cite{Needell09_cosamp} which exhibited the best empirical performance in~\cite{Kim16-MMVgreedyNN}. For integer-based recovery, we use PROMP~\cite{Flinth17-PROMP}, an SMV algorithm for unbounded integer sparse recovery, to recover an estimate for each signal $\widehat{\vx}_d$. 

For comparison against best-possible performance of the baselines and to avoid hyperparameter search (for regularization weights, stopping criteria, etc.), we arm the baselines with relevant oracle knowledge of $\vlambda^*$ or $\XB^*$. While NNS-SP and NNS-CoSaMP require $k$ as an input, we also give DCS-SOMP and PROMP, algorithms that iteratively and irreversibly select support elements, knowledge of $k$ and have them stop after $k$ elements have been chosen. Additionally, we created two oracle-enabled convex algorithms. The $\ell_1$ norm is commonly used as a penalty for convex solvers to encourage sparsity in sparse recovery. Our $\ell_1$-Oracles include SMV and MMV versions, where in the MMV case, $\YB$ is collapsed to a single vector by summing $\sum_d \vy_d$, and a vector $\widehat{\vx}$ is recovered from which $\widehat{\vlambda} = \widehat{\vx}/D$. In~\cite{Tropp06-MMVpt2RX}, $\| \XB \|_{Rx}$ is suggested as a better alternative for MMV. Our $\ell_1$-Oracles and $Rx$-Oracle use 
$\sum_{n,d} x_{n,d} = \sum_{n,d} x_{n,d}^*$ and $\| \XB \|_{Rx} \leq \| \XB^* \|_{Rx}$ as convex constraints while minimizing $\sum_{g=1}^G \| \YB - \PhiB^{(g)} \XB^{(g)} \|_F$. We also set the affine constraint $\XB \geq 0$ for all three algorithms. We use the convex optimization package CVX in Matlab for these algorithms~\cite{cvx, gb08}.

From Fig.~\ref{fig:cs_M}, we see the crucial result that the $M < k$ regime is only feasible with SPoRe, while conventional CS algorithms, both SMV and MMV, fail. Such a result is expected; generally speaking, CS algorithms seek to minimize measurement error ($\|\YB-\PhiB \XB\|_F$) while constraining the sparsity of the recovered solution. CS theory focuses on $M > k$ since if $M < k$, $M \times k$ submatrices of $\PhiB$ yield underdetermined systems in general. In other words, there simply cannot be unique $k$-sparse minimizers of measurement error alone with $M < k$, so the conventional CS problem is not well-posed, unlike in the MMVP problem. Next, in Fig.~\ref{fig:cs_sigma}, we set $M=10$, a regime where most baselines performed nearly perfectly according to (Fig.~\ref{fig:cs_M}), and we increased the AWGN variance. We see that even in the conventional regime of $N > M > k$, SPoRe exhibits the highest noise tolerance which reflects the fact that its leverage of the Poisson assumption minimizes its dependence on accurate measurements.
Lastly, however, in Fig.~\ref{fig:cs_lambda}, SPoRe has the unique disadvantage of struggling to recover cases with high $\sum_n \lambda_n^*$. We observed that as $\sum_n \lambda_n^*$ increases, SPoRe's finite sampling results in few to no gradient steps taken as ``good'' samples with nonzero (numerically) $p(\vy|\vx_s)$ were drawn increasingly rarely, and SPoRe mistakenly terminates. Under AWGN, larger $\lambda^*_n$ raises the signal-to-noise ratio but can paradoxically compromise SPoRe's performance. If $M\gg k$ is a practical design choice, practitioners should consider existing MMV approaches if $\sum_n \lambda_n^*$ may be highly variable.

\subsection{Comparison against custom baselines: $M<k$} \label{subsec_sims_int}
In the $M < k$ regime, since with high probability we can bound the elements of $\XB^*$, we might expect the discrete nature of the problem to admit at the least a brute-force solution for obtaining $\widehat{\XB}$ that we can use to obtain $\widehat{\vlambda}$. Indeed, if measurement noise is low, then the integer signal that minimizes measurement error for $\vy_d$ is likely to be $\vx_d^*$. But a finite search space alone has not enabled integer-constrained CS research to achieve $M<k$ in general. 

One may wonder whether SPoRe is simply taking advantage of this practically finite search space and, by virtue of MC sampling over thousands of iterations, is effectively finding the right solution by brute force.
To address this possibility, we compare against an $\ell_0$-Oracle that is given $k$ and the maximum value in $\XB^*$ in order to test all $N \choose k$ combinations of $\XB$'s support. For each combination, it enumerates the $(\max(\XB^*)+1)^k$ possibilities for each $\vx_d$ and selects $\widehat{\vx}_d = \arg \min_{\vx} \| \vy_d - \PhiB \vx\|_2$. Finally, it returns the $k$-sparse solution with the lowest minimized measurement error. This algorithm is the only Poisson-free approach in this section. 

Comparing SPoRe and other Poisson-enabled baselines against the $\ell_0$-Oracle characterizes the effect of incorporating the Poisson assumption on recovery performance. An early solution of ours for tackling MMVP, which we now use as a baseline, was an alternating optimization framework to update estimates of $\widehat{\XB} = \argmax_\XB p(\XB|\YB, \widehat{\vlambda})$ and $\widehat{\vlambda} = \argmax_{\vlambda} p(\vlambda|\YB, \widehat{\XB})$. Noting that $p(\XB|\YB, \widehat{\vlambda}) \propto p(\YB|\XB)p(\XB|\widehat{\vlambda})$, this MAP framework for solving for $\XB$ under AWGN with variance $\sigma^2$ is
\begin{align}
  \widehat{\XB} & = \argmax_\XB \frac{1}{D} \sum_{d=1}^D \log P(\vy_d|\vx_d) + \log p(\vx_d|\widehat{\vlambda}) \label{eq:obj_alt_basic}\\
  &= \argmax_\XB \frac{1}{D} \sum_{d=1}^D \bigg[ -\frac{1}{2\sigma^2} \|\vy_d - \PhiB \vx_d\|^2_2 \nonumber \\ 
  &\quad +\sum_{n=1}^N \bigg(x_{n,d}\log \hat{\lambda}_n - \log \Gamma(x_{n,d}+1)\bigg)\bigg] \label{eq:obj_alt},
\end{align}
where the Gamma function $\Gamma(\cdot)$ is the continuous extension of the factorial ($\Gamma(x_{n,d}+1) = x_{n,d}!$) and is log-concave in the space of positive reals $\mathbb{R}_{++}^N$. We implemented the classic branch-and-bound (BB) algorithm~\cite{Land60-BB} to find the optimal integer-valued solution $\widehat{\XB}$ of the concave objective.
Once an estimate $\widehat{\XB}$ is available, the update to $\widehat{\vlambda}$ is also concave with the closed form solution $\widehat{\vlambda} = \frac{1}{D}\sum_d{\widehat{\vx}_d}$. 
The biconcavity of this objective function in $\XB$ and $\vlambda$ makes this approach attractive, but it is unclear how to best initialize $\widehat{\vlambda}$. 
We refer to this alternating baseline algorithm with the prefix~``Alt'' followed by the method of initialization. For example, for Alt-Random, we use random initialization with a small offset ($\hat{\lambda}_n \sim \mathrm{Uniform}(0,1)+0.1$) to avoid making any particular $\lambda_n$ irretrievable from the start.

We also explore a few ``guided'' initialization processes. The quantity $\sum_n \lambda_n^*$ can hypothetically be estimated from data if $P(\vx_d=\vzero|\vlambda)$ is significant and easily estimated from $\YB$.
In fact, quantification in microfluidics often relies on a clear identification of empty sample partitions (that is, where $\vx_d = \vzero$)~\cite{Basu17-digitalrev}. This motivates a strategy of relaxing the problem by optimizing $\XB$ with a Poisson assumption on the sum of each column $\sum_{n} x_{n,d}$ rather than each element of $\XB$ individually.
The $\sum_n \lambda_n^*$-Oracle is given $\sum_n \lambda_n^*$ and optimizes for $P(\sum_{n} x_{n,d} | \sum_{n} \lambda_n^*)$ in place of $P(\vx_d|\widehat{\vlambda})$ in~\eqref{eq:obj_alt_basic}. 
It is straightforward to show that this objective is also concave. Each estimate $\widehat{\vx}_d$ is solved via BB from which $\widehat{\vlambda} = \frac{1}{D} \sum_d \widehat{\vx}_d$. 
We use the $\sum_n \lambda_n^*$-Oracle as its own baseline and as an initialization to our alternating framework (Alt-$\sum_n \lambda_n^*$).
For the next guided initialization, we again use $\sum_n \lambda_n^*$ for an unbiased initialization where the first estimate of $\hat{\lambda}_n = (\sum_{n'} \lambda_{n'}^*)/N $ for all $n$ (Alt-Unbiased). Finally, we used the output of SPoRe as an initial value for $\widehat{\vlambda}$ (Alt-SPoRe). Alt-SPoRe can be understood as a way to use SPoRe to estimate $\widehat{\XB}$ if needed. 

\begin{figure}
  \centering
   \includegraphics[trim=0 122 0 115, clip, width=1 \linewidth]{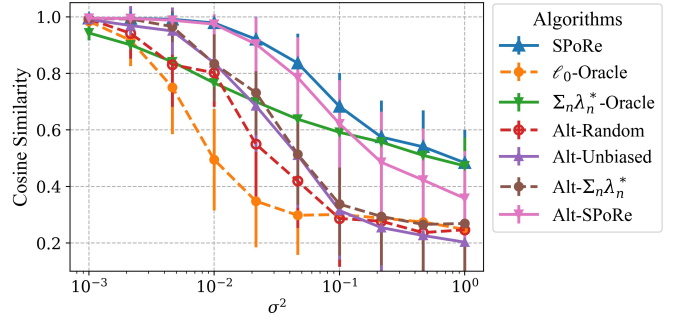} \caption{AWGN tolerance of integer-restricted algorithms over 50 trials with $M=2$, $k=3$, $N=10$, $D=100$, $G=1$, $\sum_n \lambda_n^*=2$.}
\label{fig:intbaselines}
\end{figure}

Fig.~\ref{fig:intbaselines} illustrates that SPoRe has the greatest tolerance to measurement noise whereas the $\ell_0$-Oracle has the least. This comparison illustrates the value of incorporating the Poisson assumption in recovery; specifically, the integer and sparsity structures (perfectly captured by the $\ell_0$-Oracle) are not sufficient for recovery under measurement noise. The alternating optimization algorithm's behavior was unexpected; initialization (other than with Alt-SPoRe) does not appear to have a major influence on performance. Surprisingly, comparing Alt-$\sum_n \lambda_n^*$ versus $\sum_n \lambda_n^*$-Oracle and Alt-SPoRe versus SPoRe, alternating seems to worsen the performance under high noise. Our interpretation is that in high noise settings, the ability of SPoRe to not ``overcommit" to a particular solution $\widehat{\vx}_d$ may be especially effective when $\vlambda^*$ is the signal of interest rather than $\XB^*$. Any given measurement $\vy_d$ may make the specific estimate $\widehat{\vx}_d$ arbitrarily unreliable. In our alternating framework with $\widehat{\vx}_d$ recovered separately for each $d$, errors on individual estimates accumulate. SPoRe instead makes gradient steps based on batches of observations, helping it maintain awareness of the full distribution of measurements.

\subsection{Sparsity and $\sum_n \lambda_n^*$} \label{subsec:sumlambda_and_k}

\begin{figure*} 
  \centering
  \captionsetup[subfigure]{oneside,margin={0.7cm,0cm}}

  \subfloat[Initial $\widehat{\vlambda} = \mathbf{0.1}$]{\label{fig:klam_reg_init} \includegraphics[trim=15 40 13 45, clip, width=0.29\textwidth, valign=t]{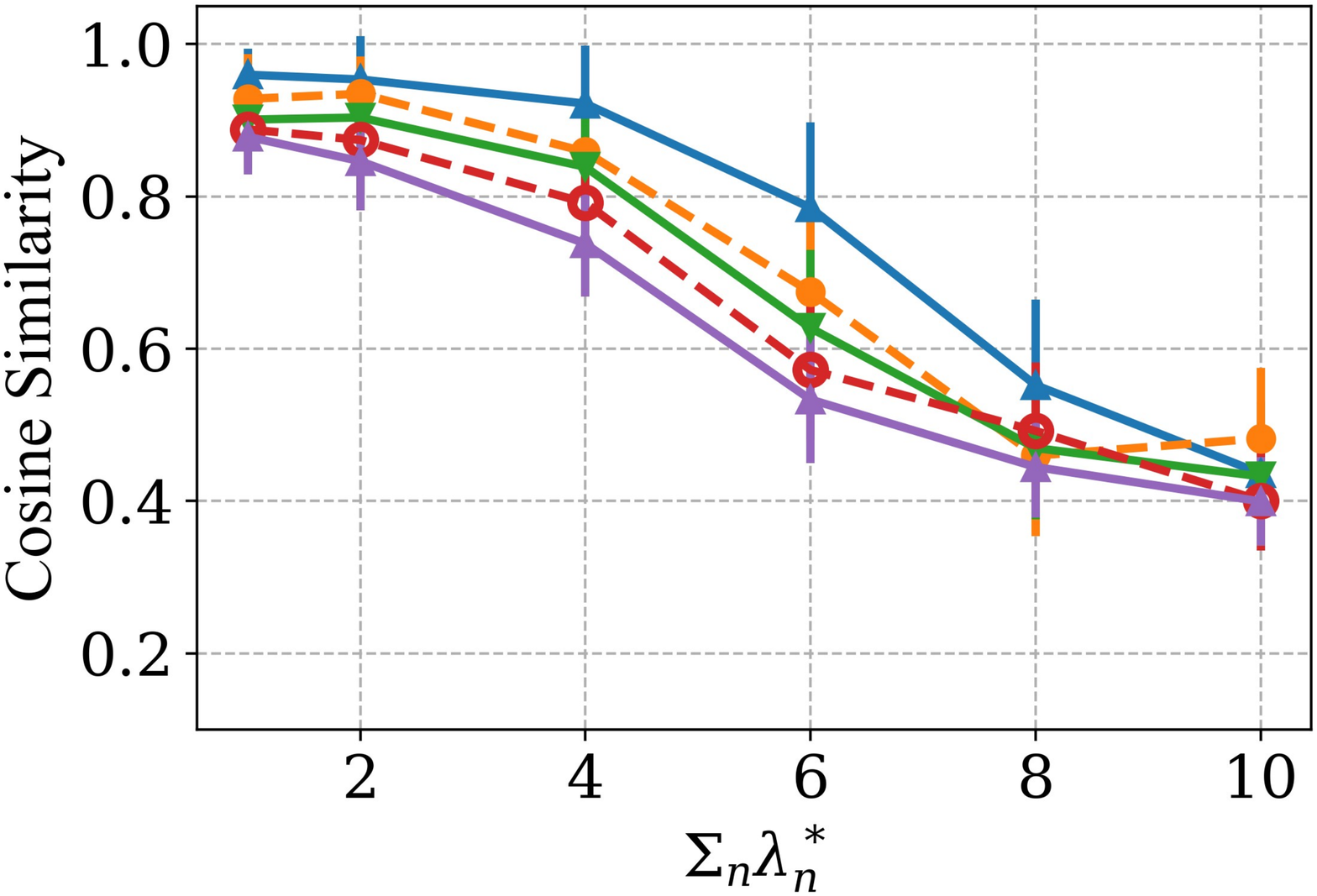}}
  \hfill
  \subfloat[Initial $\widehat{\vlambda} \approx \vlambda^*$]{\label{fig:klam_lamStar_init} \includegraphics[trim=15 40 13 45, clip, width=0.29\textwidth, valign=t]{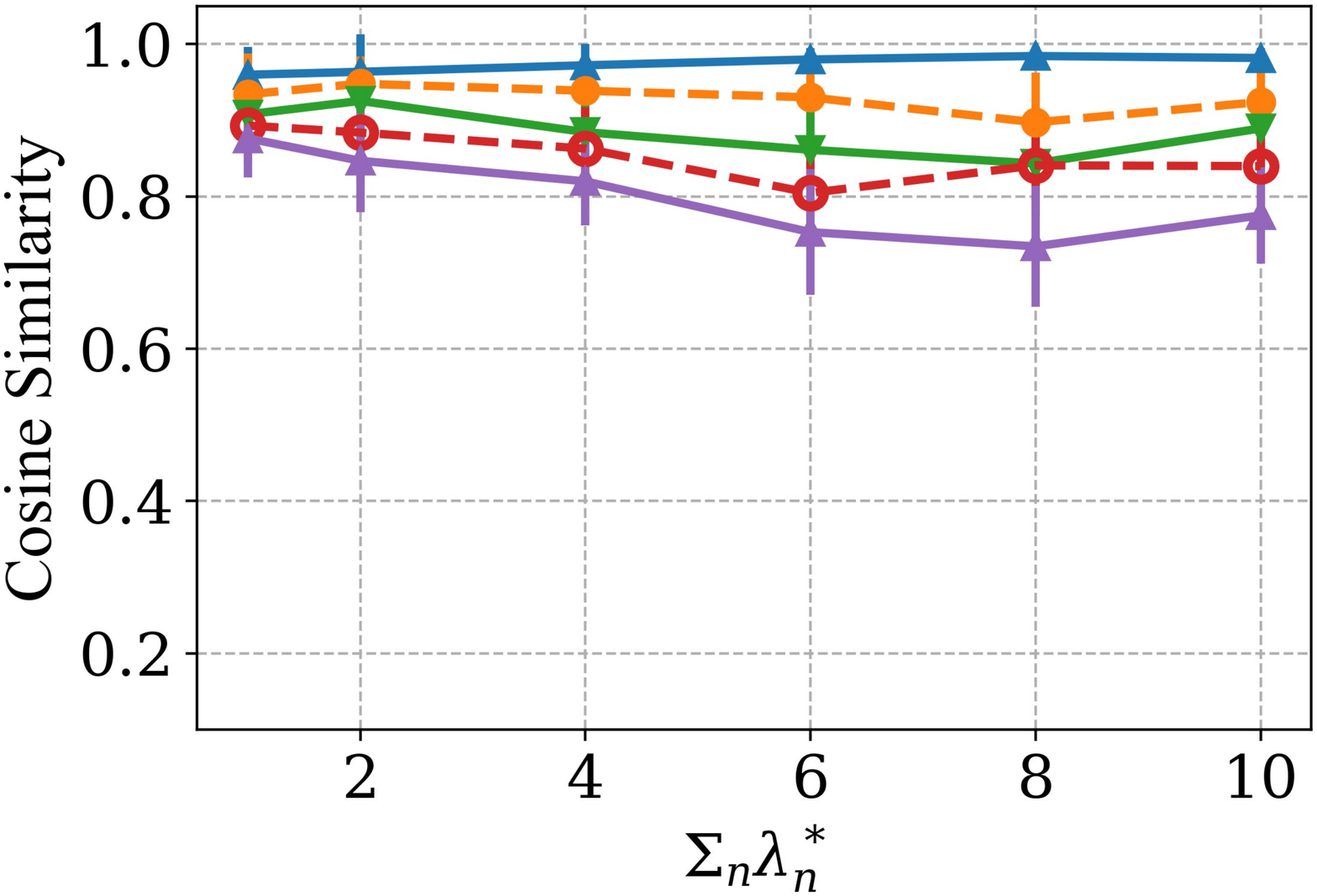}}
  \hfill
  \subfloat[Variance of MC Gradients] {\label{fig:klam_partials} \includegraphics[trim=15 40 13 50, clip, width=0.29\textwidth, valign=t]{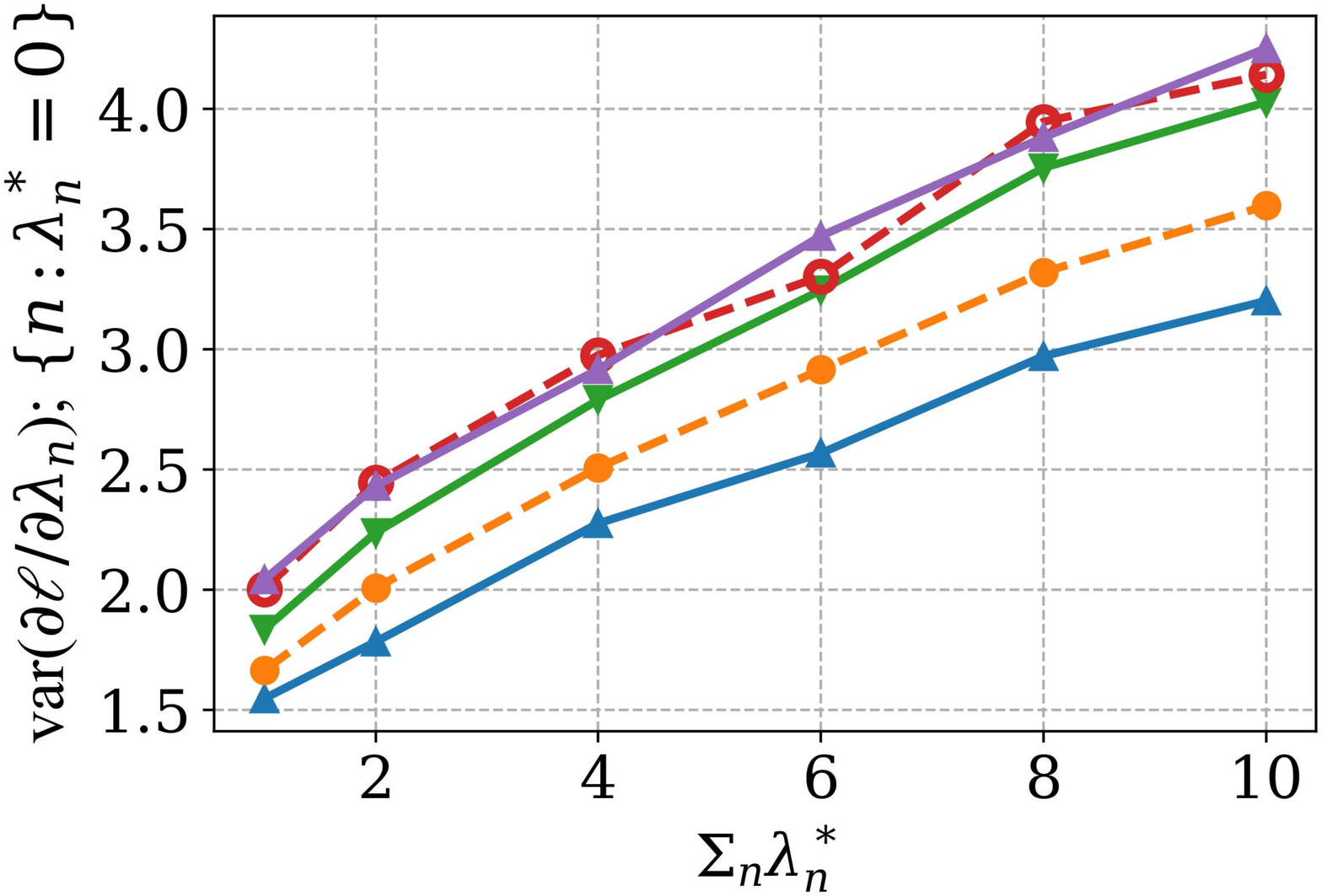}}
  \hfill
  \subfloat{\includegraphics[trim=115 0 110 140, clip, width=0.08\textwidth, valign=t]{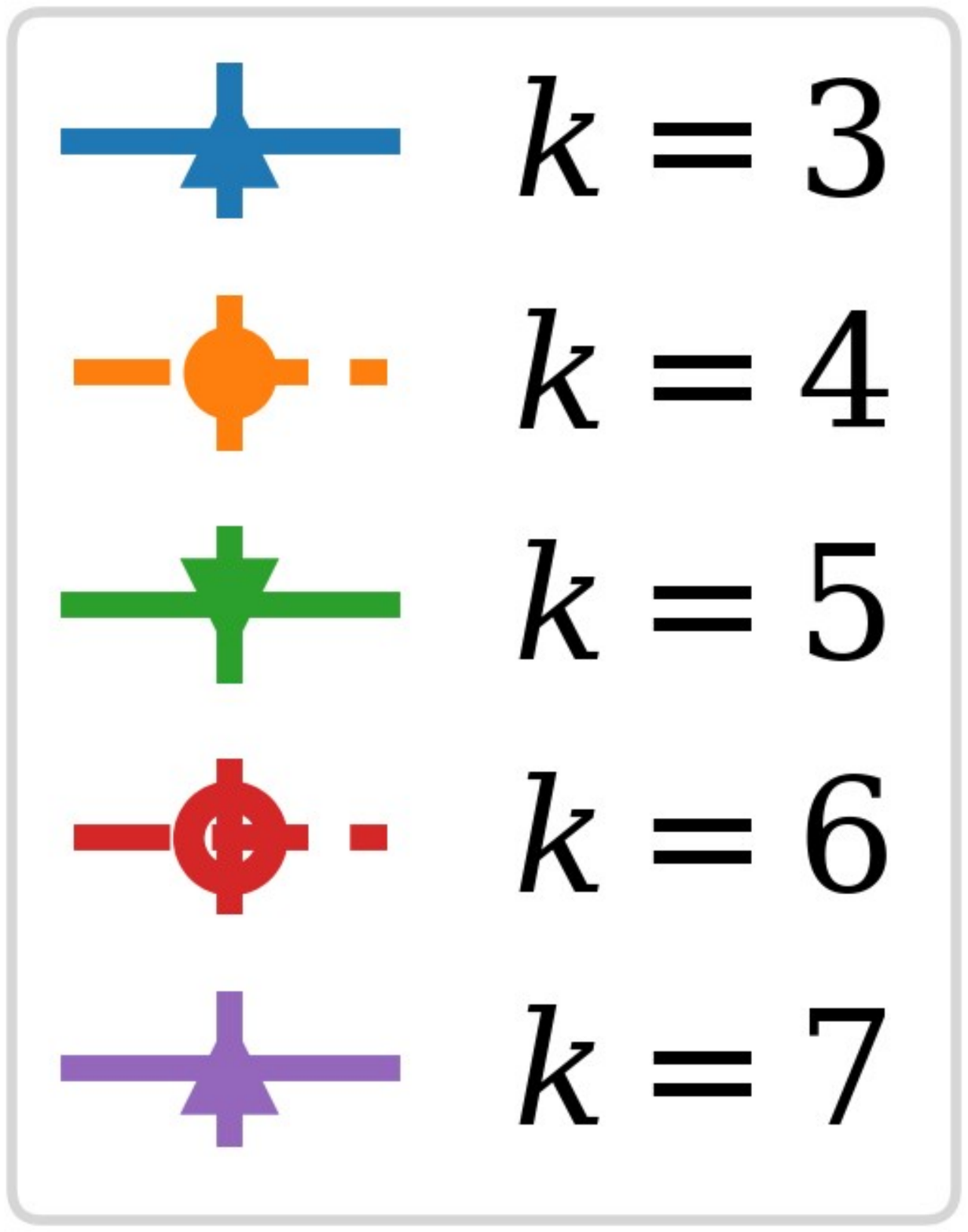}}\\
\caption{SPoRe's performance and behavior as a function of $k$ and $\sum_n \lambda^*_n$ over 50 trials. Common settings unless otherwise specified are $M=2$, $N=50$, $D=1000$, $G=1$, $\sigma^2 = 10^{-2}$. \textbf{(a)} Performance when initialized with standard $\widehat{\vlambda}=\mathbf{0.1}$. \textbf{(b)} Performance when initialized with $\widehat{\vlambda} \approx \vlambda^*$, specifically $\hat\lambda_n = \max \{\epsilon, \lambda_n^*\}$. \textbf{(c)} Average variance of partial derivatives for indices $n \notin \mathrm{supp}(\vlambda^*)$ evaluated at $\widehat{\vlambda} \approx \vlambda^*$.} \label{fig:klam}
\end{figure*}

We empirically tested the limitations of SPoRe's recovery performance under very challenging conditions of $M=2$, ${3 \leq k \leq 7}$, $N=50$, $\sigma^2 = 10^{-2}$.
Here we set $D=1000$ to better reflect the typical capabilities of biomedical systems, whereas $D=100$ in our baseline comparisons was due to our budget on computational time strained by solving BB for $\widehat{\vx}_d$. From our analysis and previous simulations, we expect that both $k$ and the magnitudes of $\lambda^*_n$ will influence recovery. Fig.~\ref{fig:klam} probes when and why SPoRe fails. Fig.~\ref{fig:klam_reg_init} illustrates SPoRe's performance decreases with increasing $k$ and $\sum_n \lambda^*_n$. To elucidate the cause of poor performance, Fig.~\ref{fig:klam_lamStar_init} shows SPoRe's performance under the same conditions when initialized at the optimum. SPoRe's maintenance of high cosine similarity in this case means that in Fig.~\ref{fig:klam_reg_init}, SPoRe is converging to incorrect optima (or terminating before convergence). These two figures depict fundamental limitations of stochastic optimization in a challenging landscape.

Moreover, Fig.~\ref{fig:klam_partials} illustrates that MC gradients decrease in quality with high $\sum_n \lambda^*_n$ and $k$. In SPoRe, recall that we set a minimum $\hat{\lambda}_n \geq \epsilon = 10^{-3}$ so that nonzero $x_n^*$ have a chance of being sampled for all $n$. We keep $S$ fixed as we increase $k$ and $\sum_n \lambda^*_n$, and we see that the variance of the gradient increases at coordinates where $\hat\lambda_n = \epsilon$ and $\lambda_n^* = 0$. 
Such an effect accounts for some drift from the optimum observed in Fig.~\ref{fig:klam_lamStar_init} that increases with $k$, and we believe that it helps to explain the inability to converge to the optimum in Fig.~\ref{fig:klam_reg_init}. Future work can explore alternative techniques for stochastic optimization and sampling. Practitioners may benefit significantly from reducing $\sum_n \lambda^*_n$ if faced with limitations in $M$.

However, note in~\eqref{eq:grad_final} that SPoRe's gradients are defined by an average of $\vx_s$ weighted by $p(\vy_d|\vx_s)$. The previous result from Fig.~\ref{fig:cs_lambda} in which SPoRe performed well with $\sum_n \lambda_n^* \leq 20$ when $M=10$ illustrates that limitations of MC sampling may be offset by improving the ability of $p(\vy|\vx_s)$ to guide gradients. In Fig.~\ref{fig:Gs}, we explore this notion further for $M$-constrained systems by increasing $G$. One may wonder how increasing $G$ compares to a CS problem with $GM$ measurements (i.e., $\Bar{\PhiB} \in \mathbb{R}^{GM \times N}$). Although $\vlambda^*$ is fixed across groups, the $\vx_d$ are random such that there is no reasonable method to directly stack individual measurements $\vy_d$ from multiple groups. Instead, we created a new baseline $\ell_1$-Oracle $GM$ SMV. Denote the average of measurements and signals in each group $g$ as $\Bar{\vy}^{(g)}$ and $\Bar{\vx}^{(g)}$, respectively. Our new baseline stacks all $\Bar{\vy}^{(g)}$ into $\Bar{\vy} \in \mathbb{R}^{GM \times 1}$ measurement vector. Given $\sum_{n,d} x_{n,d}^*$, the algorithm then directly recovers $\widehat{\vlambda}$ with a convex process similar to that of the previously described $\ell_1$-Oracle SMV. Stacking measurements implicitly assumes that for each group, $\Bar{\vy}^{(g)} \approx \PhiB^{(g)}\vlambda^*$, or that $\Bar{\vx}^{(g)} \approx \vlambda^*\; \forall g$. It can be easily shown that the relative errors in these approximations reduce with increasing $D$ or $\vlambda^*$.

Although increasing $D$ is feasible in microfluidics, it generally corresponds with a reduction in $\vlambda^*$ since a sample's total analyte content is fixed. Therefore, in Fig.~\ref{fig:Gs}, we focus on the influence of the magnitude of $\vlambda^*$. 
In Fig.~\ref{fig:G_lam10}, we used the most challenging settings from Fig.~\ref{fig:klam} with $k=7$ and $\sum_n{\lambda^*_n}=10$. As expected from our analysis in Section~\ref{subsec:small}, larger choices of $M$ make SPoRe much more effective per sensor group, but near perfect recovery is achievable even with $M=1$. However, note that the new oracle baseline performs almost identically to SPoRe, with SPoRe exhibiting a modest advantage only when $GM$ is comparable to or less than $k$. When we reduce $\sum_n \lambda_n^*$ to $1$ in Fig.~\ref{fig:G_lam1}, the assumption that  $\Bar{\vx}^{(g)}\approx \vlambda^*$ becomes far less valid. As a result, the performance improvement with SPoRe is dramatic. 
For instance, what SPoRe achieves with $M=1$ is only matched by the oracle baseline when $M=3$.
For applications in which $\Bar{\vx}^{(g)} \approx \vlambda^*$ and $GM > k$, practitioners could consider reformulating the recovery problem as a standard CS problem. However, SPoRe is uniquely suited for systems with $GM<k$ and is the best generalized approach for applications with lower $\vlambda^*$ or $D$.

\begin{figure}[]
  \centering
  \subfloat{\label{fig:G_leg}
      \includegraphics[trim=-100 250 0 250, clip, width=0.7 \linewidth, valign=t]{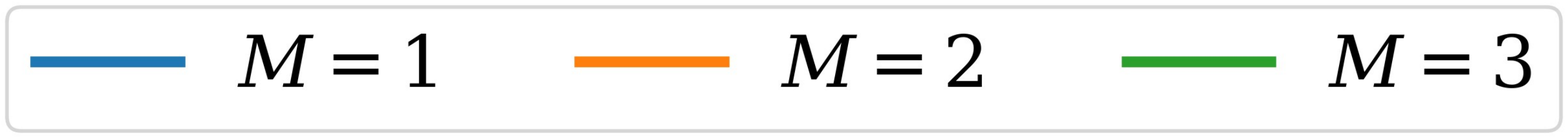}}
  
  \rule{0pt}{0.1ex}   
  
  \setcounter{subfigure}{0}
  \begin{tabular}{@{}l@{\hskip 0in}c@{}}
      \captionsetup[subfigure]{oneside,margin={0.8cm,0cm}}
      \subfloat[$\sum_n \lambda_n^* = 10$]{\label{fig:G_lam10}
        \includegraphics[trim=0 35 0 50, clip, width=0.523 \linewidth, valign=t]{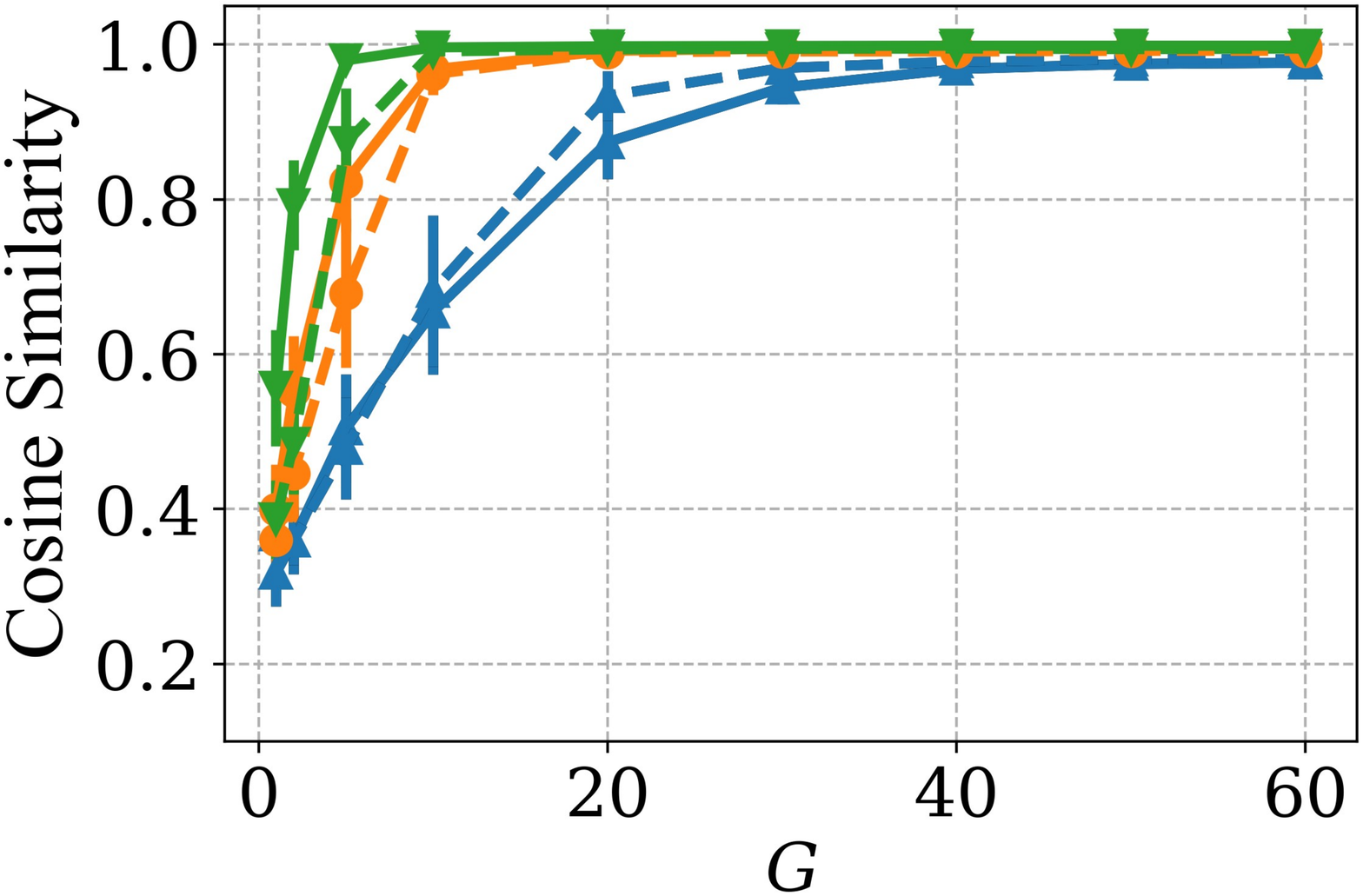}}
      &
      \captionsetup[subfigure]{oneside,margin={0cm,0cm}}
      \subfloat[$\sum_n \lambda_n^* = 1$]{\label{fig:G_lam1} \includegraphics[trim=125 35 0 50, clip, width=0.437 \linewidth, valign=t]{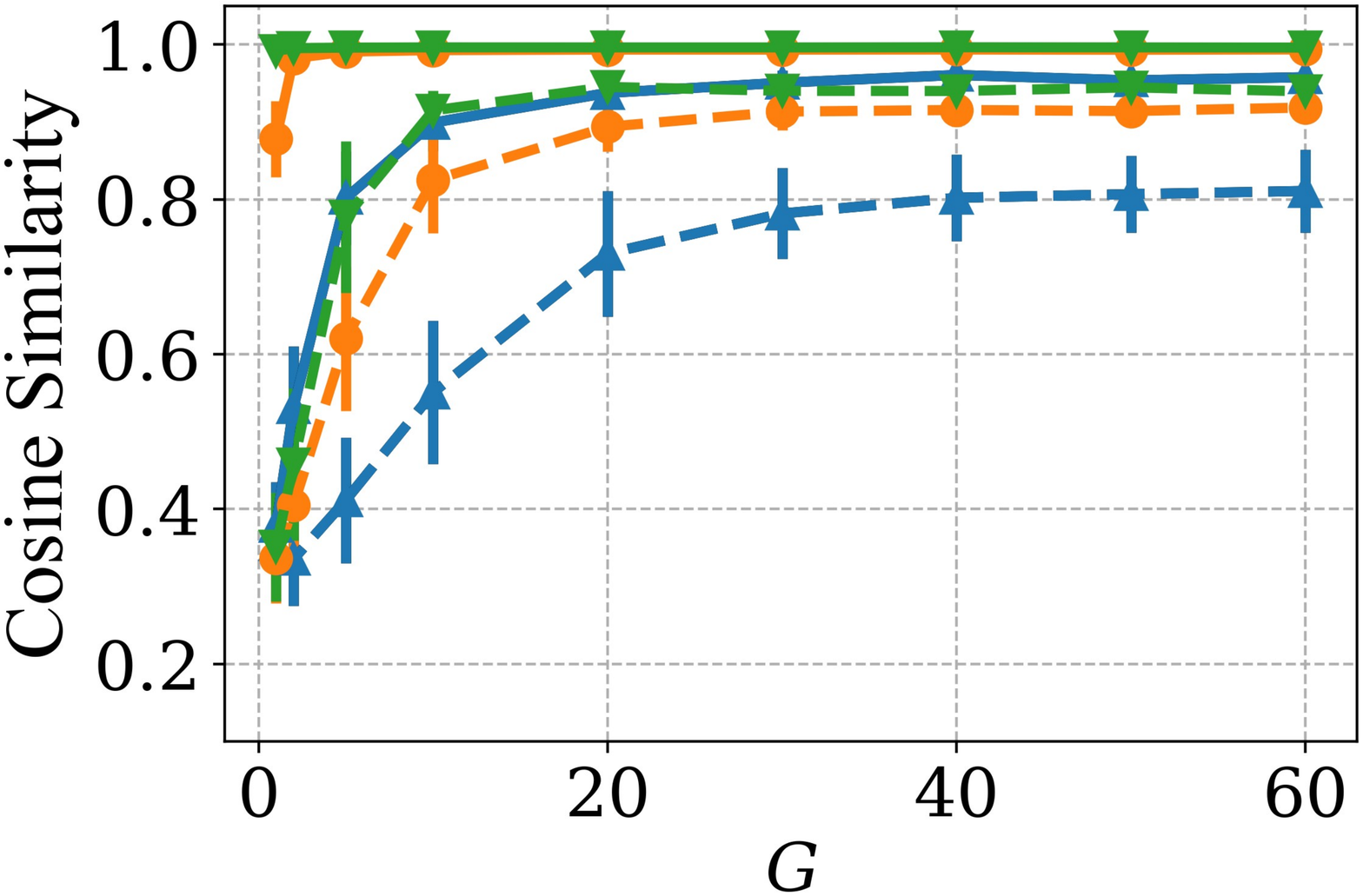}}
  \end{tabular}
  
\caption{Performance of SPoRe (solid) vs.\ $\ell_1$-Oracle $GM$ SMV (dashed) as a function of $G$. Common settings are $k=7$, $N=50$, $D=1000$, $\sigma^2 = 10^{-2}$. \textbf{(a)} Comparison with $\sum_n \lambda_n^* = 10$, motivated by Fig.~\ref{fig:klam_reg_init}. \textbf{(b)} Comparison with $\sum_n \lambda_n^* = 1$.}
\label{fig:Gs}
\end{figure}

\subsection{Efficiency} \label{subsec:efficiency}

For system design, it is helpful to know how many observations $D$ are necessary and sufficient for stable estimation of $\widehat{\vlambda}$. Such insight is often derived from the analysis of Fisher Information $\mathcal{I}$. Recall that for direct observations of Poisson signals $\vx^*_d$, the ideal case, the MLE solution $\hat{\lambda}_n = \sum_d x_{n,d}^* /D$ is an \emph{efficient estimator} and achieves the \textit{Cramér--Rao} bound such that $\mathrm{var}(\hat{\lambda}_n) = \lambda^*_n/D$.

In Section~\ref{subsec:fisher}, we derived the reduction in $\mathcal{I}_{n,n}$ from MMVP measurements and found reason to expect that the reduction increases with $k$. Here, we empirically characterize this effect (Fig.~\ref{fig:efficiency}). The matrix $\mathcal{I}$ is evaluated at $\vlambda^*$, so we only consider $n \in \mathrm{supp}(\vlambda^*)$. In MMVP, the variance of $\hat{\lambda}_n$ will depend on $\PhiB$ and $n$, but by redrawing random $\PhiB$ and $\vlambda^*$ over 50 trials, we hope to smooth out these dependencies and capture the broader effect of low dimensional projections. For a concise comparison considering $n \in \mathrm{supp}(\vlambda^*)$, we set all $\lambda^*_n = 1$, pool all of $\hat{\lambda}_n$ across all trials, and compute a single average variance for each $k$ and $D$. Because the Fisher Information describes the optimization landscape near the optimum, we chose parameter settings ($M=2$, $G=20$) based on our results in Fig.~\ref{fig:G_lam10} to be confident that SPoRe is arriving near the optimum and the estimation variance is not confounded by poor estimates. 

Fig.~\ref{fig:efficiency} shows a noticeable increase in the estimation variance and verifies that this deviation from the ideal bound is exponentially worsened in $k$. However, we argue that the variance quickly becomes negligible at reasonable $D$ for practical purposes. 
Practitioners could consider the necessary precision of estimation and the maximum expected $k$ for an application, increase $D$ as needed, and worry little about the influence of noisy measurements in low dimensions.

\begin{figure} 
  \centering
  \includegraphics[trim=0 65 0 45, clip, width=\fw \linewidth]{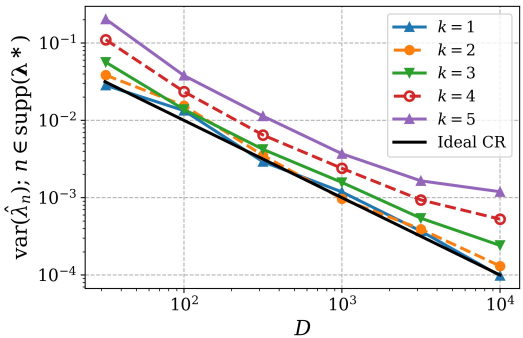}
  \caption{Comparison of average variance of $\hat{\lambda}_n$ from SPoRe versus the ideal Cramér--Rao (CR) bound as a function of $D$ over 50 trials with $n \in \mathrm{supp}(\vlambda^*)$, $\lambda^*_n=1$, $M=2$, $N=50$, $G=20$, $\sigma^2 = 10^{-2}$.} \label{fig:efficiency}
\end{figure}

\section{Discussion}

We have found that the structure in the MMVP problem can be easily exploited for substantial improvements in signal recovery. While compressed sensing of arbitrary integer signals has proven challenging in the past, Poisson constraints not only make the recovery problem tractable, but even significantly easier. Most inverse problems necessitate constraints that make the signal-to-measurement transformations nearly isometric: in compressed sensing, these manifest as restrictions on $\mathbf{\Phi}$, noise, and the relationship between $M$, $N$, and $k$. In MMVP, recovery of $\vlambda^*$ is theoretically possible under very lax conditions on $\mathbf{\Phi}$ (Theorem~\ref{th:id_weights}) and practically achievable as shown in our simulations. 

In practice, our new SPoRe algorithm exhibits high performance even under deliberately challenging circumstances of high noise and $M<k$. 
Because the log-likelihood is not concave, SPoRe's gradient ascent approach is not theoretically guaranteed to find a global optimum since local optima may exist. However, if they exist, we speculate that SPoRe is naturally poised to evade these traps due to stochasticity in its gradient steps from both batch draws and MC integrations. 

We noted a few scenarios in which SPoRe's MC sampling appears to cause issues with convergence or early termination that are generally associated with increases in $k$ and $\sum_n \lambda^*_n$. We anticipate that further increases in $N$ may also contribute to these effects. While $k$ and $N$ are entirely determined by the application, system designers can reduce $\sum_n \lambda^*_n$ by increasing the spatial or temporal sampling rate. In microfluidics, this adjustment translates to either generating more (smaller) partitions $D$ given a fixed sample volume or diluting a sample prior to  partitioning. 
Our initial implementation of SPoRe uses $S=1000$, can easily run on personal computers, and is sufficient for systems with $N<10^2$. This scale is appropriate for most applications in biosensing, and future work with parallelized or adaptive sampling strategies could improve the reliability of recovery for larger systems. Moreover, we found that increasing $M$ and $G$ appear to mitigate poor performance due to excessive sampling noise.

The ability to recover signals with $M<k$ in MMVP, even in the extreme case of $M=1$, is unprecedented in CS and offers a new paradigm for sensor-constrained applications. The state-of-the-art method for achieving this efficiency in microfluidics is to essentially guarantee single-analyte capture for classification by substantially increasing the sampling rate, whereas our MMVP framework is not reliant on such an intervention. Increasing $G$ can make SPoRe reliable under harsh conditions, is straightforward with microfluidics, and offers a potent alternative to adjusting the sampling rate. For example, diluting to a tolerable concentration is challenging with samples of unknown content such as in diagnostics applications.
Our group is continuing research in CS-based microbial diagnostics \cite{Aghazadeh16-UMD} by working towards an \textit{in vitro} demonstration of MMVP-based diagnostics with microfluidics. 

Our initial theoretical and empirical results show the promise of MMVP, but there are many directions for further research. For instance, theoretical results that precisely relate $M$, $N$, $D$, $k$, $\lambda^*$ and noise such as in a recovery guarantee would be highly valuable. Moreover, SPoRe can accept any signal-to-measurement model $p(\vy|\vx)$. While we have proven identifiability under linear mappings with common noise models, SPoRe can be easily applied with any application-specific model\footnote{Our full code base is available at \url{https://github.com/pavankkota/SPoRe} with instructions on how to implement SPoRe with custom models.} even if proving identifiability of the Poisson mixture is difficult. With growing interest in microfluidics, SPoRe's promising performance in the MMVP problem warrants further research to ensure that the statistical assumptions underlying these new technologies are leveraged~to~their~full~potential.

\appendix
\subsection{Proof of Theorem~\ref{th:id_weights}} \label{app:theo}

We use a direct proof, assuming $P(\mathcal{C}_\vu|\vlambda) = P(\mathcal{C}_\vu|\vlambda') \ \forall \vu$ and proving the resulting implication $\vlambda = \vlambda'$. Let $\vz^{(\vx)} \in \mathbb{Z}^N$ be such that $\vx + \vz^{(\vx)} \in \mathcal{C}_{\vx}$. By Definition~\ref{defn_coll}, $\vx + \vz^{(\vx)} \in \mathbb{Z}_+^N$ and $\vz^{(\vx)} \in \mathcal{N}(\PhiB)$. 

\begin{lemma} \label{lem:sum_lambda}
If $\mathcal{N}(\mathbf{\Phi}) \cap \mathbb{R}_+^N = \{\mathbf{0} \}$, and $P(\mathcal{C}_\mathbf{0}|\vlambda) = P(\mathcal{C}_\mathbf{0}|\vlambda')$, then $\sum_n \lambda_n = \sum_n \lambda_n'$.
\end{lemma}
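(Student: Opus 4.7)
The plan is to unpack what $\mathcal{C}_{\mathbf{0}}$ means under the null space hypothesis and then read off the conclusion from the product-Poisson structure. By definition, $\mathcal{C}_{\mathbf{0}} = \{\vx \in \mathbb{Z}_+^N : \PhiB \vx = \PhiB \mathbf{0}\} = \mathbb{Z}_+^N \cap \mathcal{N}(\PhiB)$. The assumption $\mathcal{N}(\PhiB) \cap \mathbb{R}_+^N = \{\mathbf{0}\}$ forces this intersection to collapse to $\{\mathbf{0}\}$, since $\mathbb{Z}_+^N \subset \mathbb{R}_+^N$. Hence $P(\mathcal{C}_{\mathbf{0}} \mid \vlambda) = P(\vx = \mathbf{0} \mid \vlambda)$, and likewise for $\vlambda'$.

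Next, I would invoke the independent Poisson structure $x_n \sim \mathrm{Poisson}(\lambda_n)$ so that
\begin{equation}
P(\vx = \mathbf{0} \mid \vlambda) = \prod_{n=1}^N e^{-\lambda_n} = \exp\Bigl(-\sum_{n=1}^N \lambda_n\Bigr).
\end{equation}
Setting this equal to the analogous expression with $\vlambda'$ and taking logarithms yields $\sum_n \lambda_n = \sum_n \lambda_n'$, which is exactly the lemma.

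There is no real obstacle here; the entire content of the lemma is the observation that the null space condition isolates the zero signal as the unique member of $\mathcal{C}_{\mathbf{0}}$, after which the Poisson product formula for $P(\vx = \mathbf{0} \mid \vlambda)$ does the rest. The only thing to be a bit careful about is to state explicitly that $\mathbb{Z}_+^N \subseteq \mathbb{R}_+^N$ so that the hypothesis (phrased in terms of $\mathbb{R}_+^N$) actually rules out nonzero integer null space vectors, and to note that the map $t \mapsto e^{-t}$ is injective so one can cancel the exponential.
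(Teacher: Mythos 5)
Your proof is correct and follows exactly the same route as the paper: the null space condition collapses $\mathcal{C}_{\mathbf{0}}$ to $\{\mathbf{0}\}$, and the Poisson product formula $P(\vx=\mathbf{0}\mid\vlambda) = e^{-\sum_n \lambda_n}$ together with injectivity of the exponential gives the result. Your explicit remarks about $\mathbb{Z}_+^N \subseteq \mathbb{R}_+^N$ and the injectivity of $t \mapsto e^{-t}$ are fine points the paper leaves implicit.
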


\begin{proof} The null space condition on $\mathbf{\Phi}$ means that $\mathcal{C}_\mathbf{0} = \{ \mathbf{0} \}$; there is no vector $\vz^{(\mathbf{0})}$ that satisfies $\mathbf{0} + z^{(\mathbf{0})} \in \mathbb{Z}_+^N$ other than $\vz^{(\mathbf{0})} = \mathbf{0}$. Therefore, $P(\mathcal{C}_\mathbf{0}|\vlambda) = P(\mathcal{C}_\mathbf{0}|\vlambda') \Rightarrow e^{-\sum_n \lambda_n} = e^{-\sum_n \lambda_n'} \Rightarrow \sum_n \lambda_n = \sum_n \lambda_n'$. 
\end{proof}

We now turn our attention to the \textit{one-hot} collision sets. Let $\ve_j$ denote the $j$th standard basis vector.
By Definition~\ref{defn_coll}, 
$\mathcal{C}_{\ve_j} = \{ \vx: \PhiB \vx = \vphi_j, \vx \in \mathbb{Z}_+^N \}$. For $\mathcal{C}_{\ve_j}$
that contain \textit{only} $\ve_j$, we have the following result: 

\begin{lemma} \label{lem:onehot}
If $\mathcal{N}(\mathbf{\Phi}) \cap \mathbb{R}_+^N = \{\mathbf{0} \}$ and $\mathcal{C}_{\ve_j} = \{ \ve_j \}$, then ${\lambda}_j = \lambda'_j$.
\end{lemma}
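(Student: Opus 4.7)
The plan is to directly compute $P(\mathcal{C}_{\ve_j} \mid \vlambda)$ in closed form, use the hypothesis $\mathcal{C}_{\ve_j} = \{\ve_j\}$ to reduce the sum defining $P(\mathcal{C}_{\ve_j} \mid \vlambda)$ to a single term, and then invoke Lemma~\ref{lem:sum_lambda} to cancel the exponential factor. The overall approach is a direct proof that mirrors the style of Lemma~\ref{lem:sum_lambda}: start from the assumed equality of collision set probabilities and algebraically isolate $\lambda_j$.

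Concretely, I would first write out the multivariate Poisson mass function for a single vector,
\begin{equation}
P(\vx \mid \vlambda) = \prod_{n=1}^N \frac{\lambda_n^{x_n} e^{-\lambda_n}}{x_n!},
\end{equation}
and evaluate this at $\vx = \ve_j$ to obtain $P(\ve_j \mid \vlambda) = \lambda_j \, e^{-\sum_n \lambda_n}$. Since the hypothesis $\mathcal{C}_{\ve_j} = \{\ve_j\}$ means that the sum in the definition of $P(\mathcal{C}_{\ve_j} \mid \vlambda)$ collapses to its single term at $\ve_j$, I get $P(\mathcal{C}_{\ve_j} \mid \vlambda) = \lambda_j e^{-\sum_n \lambda_n}$, and analogously for $\vlambda'$.

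The standing assumption of the theorem is that $P(\mathcal{C}_{\vu} \mid \vlambda) = P(\mathcal{C}_{\vu} \mid \vlambda')$ for all $\vu$, so in particular for $\vu = \ve_j$ this gives $\lambda_j e^{-\sum_n \lambda_n} = \lambda'_j e^{-\sum_n \lambda'_n}$. By Lemma~\ref{lem:sum_lambda}, the null-space hypothesis on $\PhiB$ forces $\sum_n \lambda_n = \sum_n \lambda'_n$, so the exponential factors on both sides are equal and nonzero. Dividing them out yields $\lambda_j = \lambda'_j$, completing the proof.

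There is essentially no main obstacle here: once Lemma~\ref{lem:sum_lambda} is in hand, the argument is a one-line cancellation. The only subtlety worth flagging explicitly is that the hypothesis $\mathcal{N}(\PhiB)\cap\mathbb{R}_+^N = \{\mathbf{0}\}$ is needed twice in the background — once via Lemma~\ref{lem:sum_lambda} to pin down the total mass, and implicitly to guarantee that $\mathcal{C}_{\ve_j} = \{\ve_j\}$ is even a meaningful hypothesis (it rules out the trivial way the collision set could fail to be a singleton through a nonnegative null-space vector). The real work of the larger identifiability argument lies in the subsequent step of extending this one-hot identification to arbitrary indices when $\mathcal{C}_{\ve_j}$ is not a singleton, which is outside the scope of this lemma.
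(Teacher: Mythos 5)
Your proof is correct and is essentially identical to the paper's: both evaluate $P(\mathcal{C}_{\ve_j}\mid\vlambda)$ as the single term $\lambda_j e^{-\sum_n \lambda_n}$ using the singleton hypothesis, equate under $\vlambda$ and $\vlambda'$, and cancel the exponentials via Lemma~\ref{lem:sum_lambda}. No gaps.
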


\begin{proof} The restriction on $\mathcal{C}_{1j}$ means $P(\mathcal{C}_{1j}|\vlambda) = P(\mathcal{C}_{1j}|\vlambda') \Rightarrow \lambda_j e^{-\sum_n \lambda_n} = \lambda'_j e^{-\sum_n \lambda_n'}$. Applying Lemma~\ref{lem:sum_lambda} yields $\lambda_j = \lambda'_j$. 
\end{proof}

By similar arguments to Lemmas~\ref{lem:sum_lambda} and~\ref{lem:onehot}, we can prove Corollary~\ref{cor:id_continuous} under the assumption that there are no collisions instead of the null space condition.  When the elements of $\PhiB$ are independently drawn from continuous distributions, the collision of any particular $\vx$ and $\vx'$ occurs with probability zero. Since $\mathbb{Z}_+^N \times \mathbb{Z}_+^N$ is countably infinite, there are no collisions almost surely. As such, $\setC_{\ve_j} = \{\ve_j\} \forall j$, and therefore $\vlambda = \vlambda'$.

For the conditions in Theorem~\ref{th:id_weights}, the following Lemma states the existence of at least one $j$ satisfying
$\mathcal{C}_{\ve_j} = \{\ve_j\}$:

\begin{lemma} \label{lem:exist_onehot}
If $\mathcal{N}(\mathbf{\Phi}) \cap \mathbb{R}_+^N = \{\mathbf{0} \}$ and $\vphi_n \neq \vphi_{n'} \ \forall n, n' \in \{1, ... , N\}$ with $n \neq n'$, then $\exists \ j$ such that 
$\mathcal{C}_{\ve_j} = \{ \ve_j \}$. 
\end{lemma}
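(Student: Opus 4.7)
My plan is a proof by contradiction that combines a theorem of the alternative with a minimum-weight argument on the columns of $\PhiB$. Suppose for contradiction that no index $j$ satisfies $\mathcal{C}_{\ve_j} = \{\ve_j\}$; the goal is to derive an absurdity.

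First I would reinterpret the null-space condition via Gordan's theorem (equivalently, strict hyperplane separation of the origin from the columns of $\PhiB$). The hypothesis $\mathcal{N}(\PhiB) \cap \mathbb{R}_+^N = \{\mathbf{0}\}$ is equivalent to the existence of $\vw \in \mathbb{R}^M$ such that $\alpha_n \defeq \vw^\top \vphi_n > 0$ for every $n$. As a side-effect, no column $\vphi_n$ is zero, since otherwise $\ve_n$ itself would be a nonzero nonnegative vector in $\mathcal{N}(\PhiB)$.

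Next, for each $j$ let $\vx^{(j)} \in \mathbb{Z}_+^N \setminus \{\ve_j\}$ be a witness with $\PhiB \vx^{(j)} = \vphi_j$, guaranteed by our contradictory assumption. I would extract two structural facts: (a) $x_j^{(j)} = 0$, because otherwise $\vx^{(j)} - \ve_j \in \mathbb{Z}_+^N \cap \mathcal{N}(\PhiB) = \{\mathbf{0}\}$ would force $\vx^{(j)} = \ve_j$; and (b) $\sum_n x_n^{(j)} \geq 2$, because $\vphi_j \neq \mathbf{0}$ rules out $\vx^{(j)} = \mathbf{0}$, while $\sum_n x_n^{(j)} = 1$ together with $x_j^{(j)} = 0$ would force $\vx^{(j)} = \ve_m$ for some $m \neq j$, giving $\vphi_m = \vphi_j$ in violation of the distinct-columns hypothesis.

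Finally, picking $j^\star \in \argmin_j \alpha_j$ and applying $\vw^\top$ to $\PhiB \vx^{(j^\star)} = \vphi_{j^\star}$ yields
\begin{equation*}
\alpha_{j^\star} \;=\; \sum_{n \neq j^\star} \alpha_n\, x_n^{(j^\star)} \;\geq\; \alpha_{j^\star} \sum_{n \neq j^\star} x_n^{(j^\star)} \;\geq\; 2\alpha_{j^\star},
\end{equation*}
where the equality uses $x_{j^\star}^{(j^\star)} = 0$, the first inequality uses the minimality of $\alpha_{j^\star}$ together with $x_n^{(j^\star)} \geq 0$, and the last uses $\sum_n x_n^{(j^\star)} \geq 2$. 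Since $\alpha_{j^\star} > 0$, this is absurd. The main obstacle is the first step: recognizing that the null-space condition is equivalent to the existence of a uniformly positive linear functional on the columns of $\PhiB$. Once such a $\vw$ is in hand, the argument reduces to a clean minimum-weight calculation that makes essential use of both hypotheses, namely the null-space condition (for $x_j^{(j)} = 0$) and the distinct-columns condition (for $\sum_n x_n^{(j)} \geq 2$).
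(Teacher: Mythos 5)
Your proof is correct, and it takes a genuinely different route from the paper's. The paper also argues by contradiction and establishes the same two local facts you derive about each witness (that the $j$th coordinate must vanish, via the null-space condition, and that the remaining coordinates must sum to at least $2$, via column distinctness), but from there it proceeds combinatorially: it packages the nontrivial null-space vectors $\vz^{(\ve_j)} = \vx^{(j)} - \ve_j$ for all $j$ as columns of a matrix $\mathbf{Z}$ with $-1$ on the diagonal and nonnegative integers off the diagonal, and then repeatedly sums sub-collections of columns, invoking $\mathcal{N}(\PhiB)\cap\mathbb{R}_+^N=\{\mathbf{0}\}$ at each stage to zero out a row, until $\mathbf{Z}$ is forced into a lower-triangular form whose last column is $-\ve_N$, an impossibility. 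You instead convert the null-space hypothesis into a strictly positive linear functional $\mathbf{w}$ on the columns via Gordan's theorem and finish with a single extremal inequality at the column minimizing $\alpha_j = \mathbf{w}^\top\vphi_j$. Your argument is shorter, avoids the somewhat delicate iterated without-loss-of-generality reorderings in the paper (and the unstated case where a subset sum of columns could be $\vzero$), and needs only one witness rather than all $N$; as a bonus it identifies a concrete index that works, namely any minimizer of $\mathbf{w}^\top\vphi_j$, whereas the paper's proof is purely existential. The trade-off is that you import a theorem of the alternative, while the paper's argument is self-contained integer combinatorics.
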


\begin{proof} If 
$\mathcal{C}_{\ve_j} = \{\ve_j\}$,
then $\nexists \ \vz^{(\ve_j)} \neq \vzero$. Define $P$ as the number of one-hot collision sets that contain more than just
$\ve_j$, and note that $P \leq N$.
Without loss of generality, let us say that $\setC_{\ve_j}$ for $j \in \{1, ..., P\}$ meet this condition. Lemma~\ref{lem:exist_onehot} effectively says that $P < N$, such that $N-P > 0$ one-hot collision sets contain only 
$\ve_j$. 
We proceed with a proof by contradiction by assuming $P = N$. 

By our null space condition, 
$\vz^{(\ve_j)}$ 
must contain both positive and negative integers. There are two additional conditions on nontrivial $\vz^{(\ve_j)}$. First, because $\ve_j + \vz^{(\ve_j)} \in \mathbb{Z}_+^N$, the only negative component of $\vz^{(\ve_j)}$ is $z^{(\ve_j)}_j=-1$. To see this, if $z^{(\ve_j)}_i$ for $i \neq j$ were negative, then $\ve_j + \vz^{(\ve_j)}$ would be negative at index $i$, and if $\vz^{(\ve_j)}_j$ were less than $-1$, then $\ve_j + \vz^{(\ve_j)}$ would be negative at index $j$. Second, $\vz^{(\ve_j)}$'s 
positive elements must total to at least 2. A single positive element of $z_i^{(\ve_j)} = 1$ would imply that $\vphi_i = \vphi_j$, violating a condition on $\PhiB$.

With $P=N$, let us concatenate the 
$\vz^{(\ve_j)}$ 
column vectors into a matrix for visualization. 
\begin{equation} \label{Zdiag} 
    \mathbf{Z} = 
    \begin{bmatrix}
    -1 & z^{(\ve_2)}_1 & \ldots & z^{(\ve_N)}_1\\ 
    z^{(\ve_1)}_2 & -1 & \ldots & z^{(\ve_N)}_2 \\
    \vdots & \vdots & \ddots & \vdots \\
    z^{(\ve_1)}_N & z^{(\ve_2)}_N & \ldots & -1 
    \end{bmatrix}.
\end{equation}
Note that each column 
$\vz^{(\ve_j)}$ 
in this matrix is symbolic for any vector that satisfies the conditions we described. All columns of $\mathbf{Z}$ are in $\mathcal{N}(\mathbf{\Phi})$. Any linear combination of vectors in $\mathcal{N}(\PhiB)$ are in $\mathcal{N}(\PhiB)$. Let $\mathcal{S}$ represent a subset of indices of the columns of $\mathbf{Z}$ and let $\vz^{\mathcal{S}} \defeq \sum_{j \in \mathcal{S}} \vz^{(\ve_j)}$. 

First, let $\mathcal{S} = \{1, \ldots , N\}$. Because all off-diagonal components in $\ZB$ are nonnegative 
and because $\vz^{\mathcal{S}}$ must have one negative value, one of the rows of $\mathbf{Z}$ must be entirely zero except for the $-1$ on the diagonal. Note the ordering of the columns in $\ZB$ is arbitrary, so without loss of generality, let this be the first row. Now, let's say that $\mathcal{S} = \{2, 3, \ldots, N\}$. The same logic holds: at least one row must contain all zeros except for the $-1$. Without loss of generality, we can set $[\mathbf{Z}]_{2,3}, [\mathbf{Z}]_{2,4}, \ldots [\mathbf{Z}]_{2,N} = 0$. Repeating this process, we get a lower triangular matrix:
\begin{equation} \label{Z_LT} 
    \mathbf{Z} = 
    \begin{bmatrix}
    -1 & 0 & 0 & \ldots & 0 \\ 
    z^{(\ve_1)}_2 & -1 & 0 & \ldots & 0 \\
    z^{(\ve_1)}_3 & z^{(\ve_2)}_3 & -1 & \ldots & 0 \\
    \vdots & \vdots & \vdots & \ddots & \vdots \\
    z^{(\ve_1)}_N & z^{(\ve_2)}_N & z^{(\ve_3)}_N & \ldots & -1
    \end{bmatrix}.
\end{equation}
However, examining the final column, we see that 
$\vz^{(\ve_N)}$
is a vector of all zeros and one $-1$,
such that it cannot be in $\mathcal{N}({\PhiB})$,
proving Lemma~\ref{lem:exist_onehot} by contradiction.
\end{proof}

\begin{proof}[\unskip\nopunct]

\textit{Proof of Theorem~\ref{th:id_weights}}: Lemma~\ref{lem:exist_onehot} confirms $P<N$, meaning that we can form the concatenated matrix of
$\vz^{\ve_j}$
vectors:
\begin{equation}  \label{Z_P}
    \mathbf{Z} = \begin{bmatrix}
    -1 & 0 & \ldots & 0 \\ 
    z^{(\ve_1)}_2 & -1 & \ldots & 0 \\
    \vdots & \vdots & \ddots & \vdots \\
    z^{(\ve_1)}_P & z^{(\ve_2)}_P & \ldots & -1 \\
    \vdots & \vdots & \ddots & \vdots \\
    z^{(\ve_1)}_N & z^{(\ve_2)}_N & \ldots & z^{(\ve_P)}_N
    \end{bmatrix}.
\end{equation}
Let us now apply 
$P(\mathcal{C}_{\ve_P}|\vlambda) = P(\mathcal{C}_{\ve_P}|\vlambda')$. For all $\vx \in \setC_{\ve_P}$,
\begin{gather}
    \prod_{n=1}^{N} \frac{\lambda_n^{x_n}}{x_n!} - \prod_{n=1}^{N} \frac{{\lambda_n'}^{x_n}}{x_n!} = 0, \\ 
    \bigg(\prod_{\forall i > P}  \frac{\lambda_i ^{x_i}}{x_i!}\bigg)  \bigg( \frac{\lambda_P ^{x_P}}{x_P!} - \frac{{\lambda_P'}^{x_P}}{x_P!} \bigg) = 0, \label{eq:lam_P_diff}
\end{gather}
where Lemma~\ref{lem:sum_lambda} ($\sum_n \lambda_n = \sum_n \lambda_n'$) yields the first equality, and Lemma~\ref{lem:onehot} (all $\lambda_i = \lambda'_i \ \forall i > P$) yields the second equality when combined with the fact that $x_i = 0$ for $i<P$ due to~\eqref{Z_P}.
The only $\vx \in \mathcal{C}_{\ve_P}$ with $x_P \neq 0$ is 
$\ve_P$ which simplifies~\eqref{eq:lam_P_diff} to $\lambda_P = \lambda_P'$.

Now we have $\lambda_i = \lambda'_i \ \forall i > P-1$. Following the same arguments, we can start from 
$P(\mathcal{C}_{\ve_{P-1}}|\vlambda) = P(\mathcal{C}_{\ve_{P-1}}|\vlambda')$ and arrive at $\lambda_{P-1} = \lambda'_{P-1}$.
Applying this repeatedly ultimately yields $\vlambda = \vlambda'$, proving Theorem~\ref{th:id_weights}. \end{proof}

\subsection{Proof of Proposition~\ref{prop:small-single}}

\begin{proof}
By straightforward integration,
\begin{align}
    \label{eq:kappa-g}
    \mathbb{E}_{\vy | n^*} [p(\vy | n)]
    \propto \underbrace{\exp \left\{-\frac{1}{4 \sigma^2} \|\vphi_n - \vphi_{n^*}\|_2^2 \right\}}_{\defeq \kappa(n, n^*)}.
\end{align}
Therefore, given the constraints $\|\vlam\|_1 \leq 1$ and $\lambda_n \geq 0$, the first-order KKT condition is
\begin{align}
    \mathbb{E}_{n^*} \left[ \frac{\vkappa(n^*)}{\langle \vkappa(n^*), \vlambda \rangle} \right] = c \vs - \vmu,
\end{align}
where $\vkappa(n^*) = (\kappa(n, n^*))_{n=0}^N$, $\vs \in \partial \|\vlambda\|_1$, 
$c \geq 0$,
and $\mu_n \geq 0$. By complementary slackness, $\mu_n \lambda_n = 0$ for all $n$. Because $\mK$ is symmetric, we can rewrite the above as
\begin{align}
    \mK \left( \frac{\vlambda^*}{\mK \vlambda} \right) = c\vs - \vmu,
\end{align}
where the fraction represents element-wise division. Solving for $\vlambda$ and rescaling $\vmu$, we obtain the desired expression.
\end{proof}

\subsection{Proof of Theorem~\ref{thm:small-groups}}

\begin{proof}
Let $\kappa_g$ be defined the same as $\kappa$ from \eqref{eq:kappa-g} with $\vphi_n^{(g)}$. Then again by straightforward integration,
\begin{align}
    \tilde\kappa(n, n^*) &\defeq 
    \mathbb{E}_g \left[\kappa_g(n, n^*)\right]\\
    &= \begin{cases}
    1 & n = n^*,\\
    \left( \frac{2 \sigma^2}{2 \sigma^2 + 1} \right)^{M/2} & n \neq n^*, 0 \in \{n, n^*\}, \\
    \left( \frac{\sigma^2}{\sigma^2 + 1} \right)^{M/2} & n \neq n^*, 0 \notin \{n, n^*\}. \\
    \end{cases}
\end{align}
Let $\widetilde \mK = (\tilde \kappa(n, n'))_{n, n'=0}^N$, and let $\widehat \mK = (\tilde \kappa(n, n'))_{n, n'=1}^N$ be the sub-matrix of $\widetilde \mK$ excluding $n=0$. Then
\begin{align}
    \widehat{\mK} = (1 - a) \mI + a \mJ,
\end{align}
where $a = \left( \frac{\sigma^2}{\sigma^2 + 1} \right)^{M/2}$ and $\mJ$ is a matrix of all ones.
Leveraging the block matrix inverse,
we observe that we have the form
\begin{align}
    \widetilde{\mK}^{-1} = \begin{bmatrix}
    \frac{1}{1 - a} \mI + b \mJ & c \vone\\
    c \vone^T & d \\
    \end{bmatrix},
\end{align}
assuming the final column corresponds to $n = 0$, for some scalars $b$, $c$, and $d$. Using the formula from Proposition~\ref{prop:small-single} and the fact that $\vs - \vmu = \vone$ by assumption, we conclude that for $n > 0$, $\widetilde{\lambda}_n \propto \lambda_n^* + C$ for some $C$. Rote algebra verifies that the constant of proportionality is non-negative.
\end{proof}

\section*{Acknowledgments}

This work was supported by NSF grants CBET 2017712,  CCF-1911094, IIS-1838177, and IIS-1730574; ONR grants N00014-18-12571, N00014-20-1-2787, and N00014-20-1-2534; AFOSR grant FA9550-18-1-0478; a Vannevar Bush Faculty Fellowship, ONR grant N00014-18-1-2047; and the Rice University Institute of Biosciences and Bioengineering. P.K.K. was supported by the NLM Training Program in Biomedical Informatics and Data Science (T15LM007093).

\ifCLASSOPTIONcaptionsoff
  \newpage
\fi


\vspace{\ieeebiotrim cm}

\begin{IEEEbiography}[{\includegraphics[width=1in]{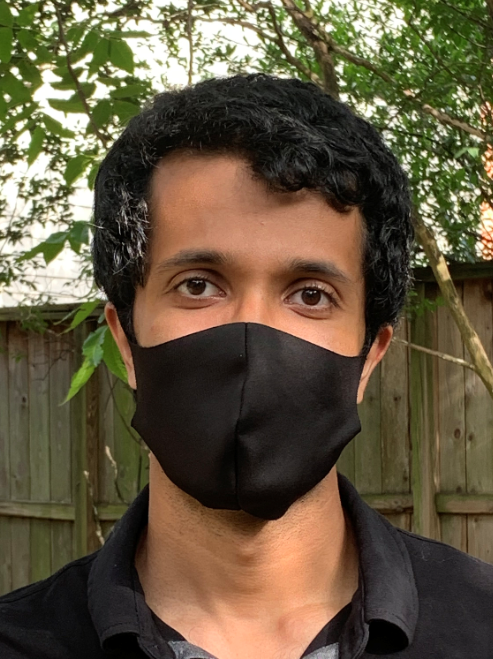}{}}]{Pavan Kota}
received the B.S.E.\ in biomedical engineering from Case Western Reserve University (Cleveland, OH) in 2017.

He is currently pursuing his Ph.D. in the Department of Bioengineering at Rice University and an NIH NLM Fellow in Biomedical Informatics and Data Science. Pavan is primarily interested in the application of signal processing and machine learning techniques to biomedical diagnostics. 
\end{IEEEbiography}

\vspace{\ieeebiotrim cm}

\begin{IEEEbiography}[{\includegraphics[width=1in]{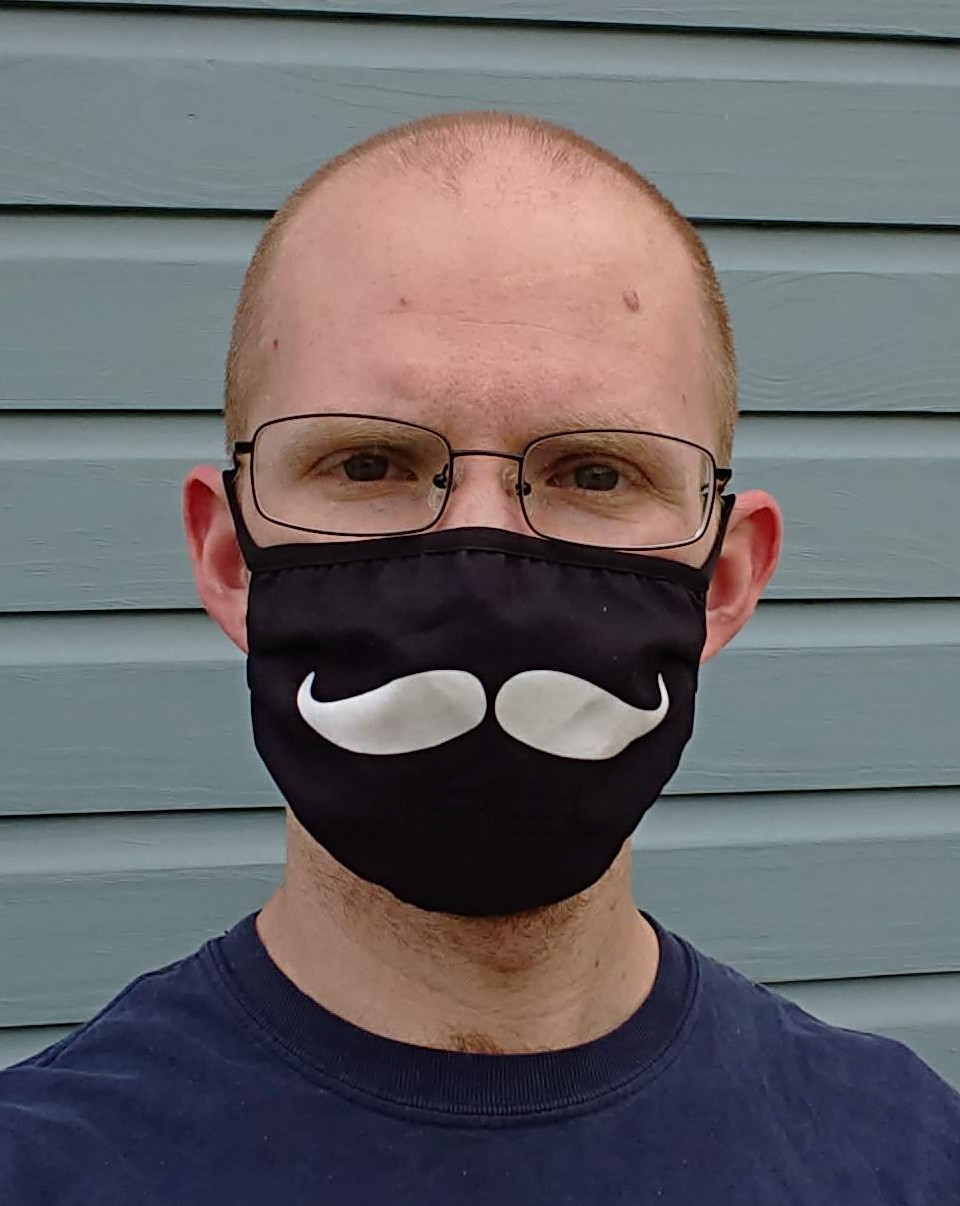}{}}]{Daniel LeJeune}
(S'11)
received the B.S.\ in engineering from McNeese State University (Lake Charles, LA) in 2014 and the M.S.\ in electrical and computer engineering from the University of Michigan in 2016.

He is currently a PhD candidate in the Department of Electrical and Computer Engineering at Rice University. His research interests include machine learning theory and adaptive algorithms.
\end{IEEEbiography}

\vspace{\ieeebiotrim cm}

\begin{IEEEbiography}[{\includegraphics[width=1in]{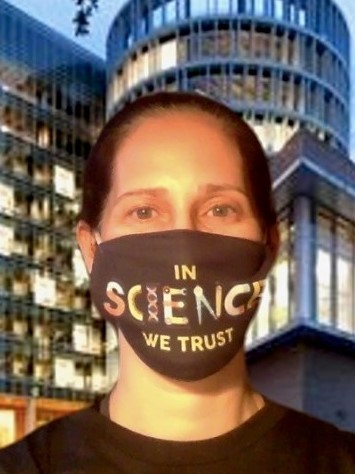}{}}]{Rebekah Drezek}
received her B.S.E. from Duke University (1996) and her M.Sc. (1998) and Ph.D. (2001) from the University of Texas at Austin, all in electrical engineering.

She is currently a professor and associate chair in the Department of Bioengineering at Rice University and a fellow of AIMBE. Her research interests include optical imaging, nanomedicine, and diagnostics.

\end{IEEEbiography}

\vspace{\ieeebiotrim cm}

\begin{IEEEbiography}[{\includegraphics[width=1in]{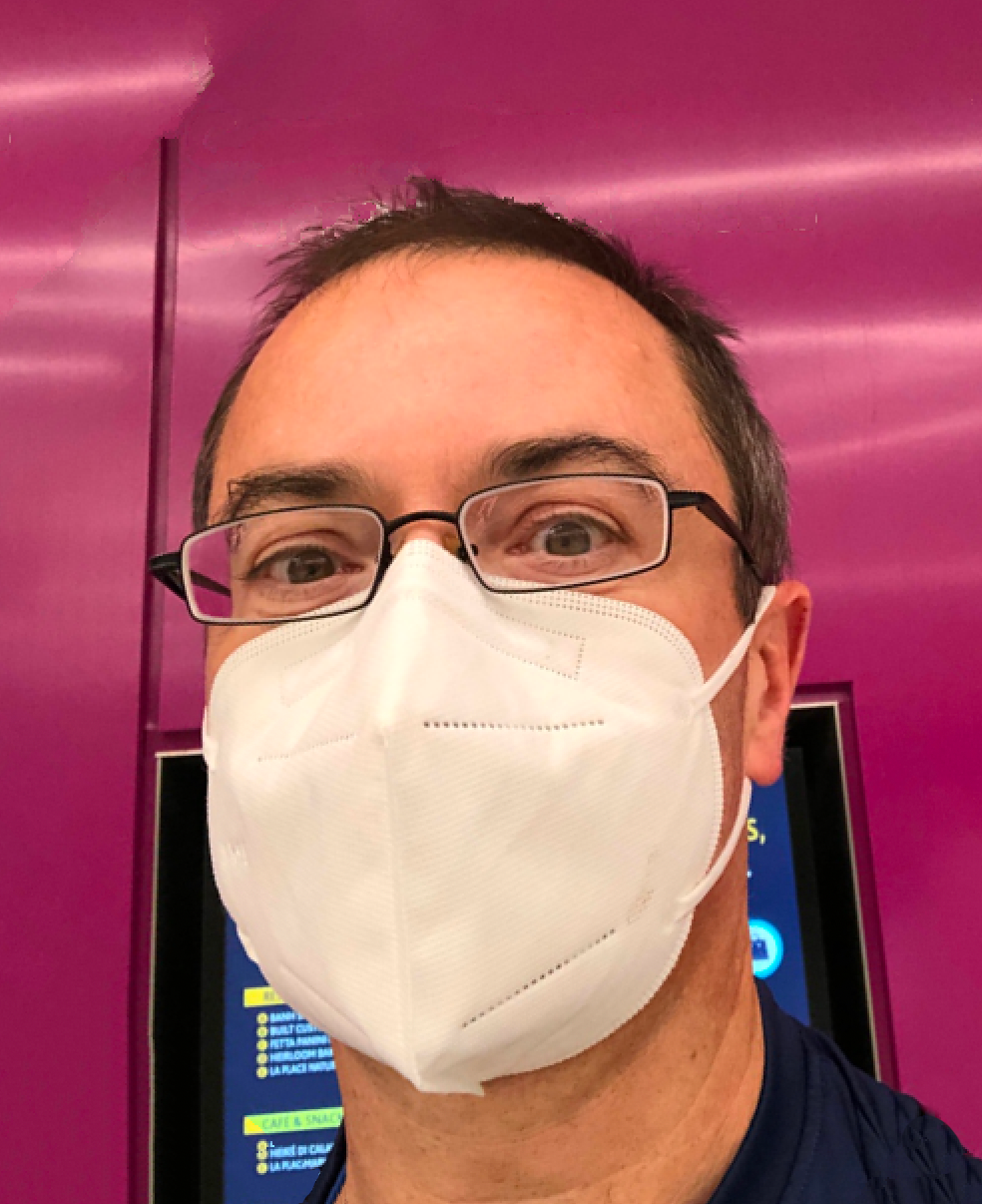}{}}]{Richard Baraniuk}
(S’85--M’93--SM’98--F’01)
received the B.S.\ 
from the University of Manitoba, Canada (1987), the M.S.\ from the
University of Wisconsin-Madison (1988), and the Ph.D.\ from the University of Illinois at Urbana-Champaign (1992), all in electrical engineering.

He is currently the Victor E.\ Cameron Professor of Electrical and Computer Engineering at Rice University and the Founding Director of OpenStax (openstax.org). His research interests lie in new theory, algorithms, and hardware for sensing, signal processing, and machine learning. He is a Fellow of the American Academy of Arts and Sciences, National Academy of Inventors, American Association for the Advancement of Science, and IEEE. He has received the DOD Vannevar Bush Faculty Fellow Award (National Security Science and Engineering Faculty Fellow), the IEEE James H. Mulligan, Jr. Education Medal, and the IEEE Signal Processing Society Technical Achievement, Education, Best Paper, Best Magazine Paper, and Best Column Awards. He holds 35 US and 6 foreign patents.

\end{IEEEbiography}

\end{document}